\newcommand{\remove}[1]{}
\newcommand{\Draft}[1]{\ifdefined\IsDraft\texttt{ #1} \fi}
\newcommand{\LLNCS}[1]{\ifdefined\IsLLNCS #1 \fi}
\newcommand{\TLLNCS}[2]{\ifdefined\IsLLNCS#1\else #2 \fi}
    \newcommand{\authnote}[2]{{\bf [{\color{red} #1's Note:} {\color{blue} #2}]}}
    \newcommand{\authnote}[2]{}
    \newcommand{\deleted}[1]{{\color{blue} ~Deleted:~{\color{red} #1}}}
    \newcommand{\deleted}[1]{}
\newcommand{\sdotfill}{\textcolor[rgb]{0.8,0.8,0.8}{\dotfill}} 
\newenvironment{protocol}{\begin{proto}}{\vspace{-\topsep}\sdotfill\end{proto}}
\newcommand{\Ensuremath}[1]{\ensuremath{#1}\xspace}
\newcommand{\MathAlg}[1]{\mathsf{#1}}
\newcommand{\MathAlgX}[1]{\Ensuremath{\MathAlg{#1}}}
\newcommand{\ie}  {i.e.,\xspace}
\newcommand{\eg}  {e.g.,\xspace}
\newcommand{\abs}[1]{\left\lvert #1 \right\rvert}
\newcommand{\ceil}[1]{\left\lceil #1 \right\rceil}
\newcommand{\set}[1]{\ens{#1}}
\newcommand{\floor}[1]{\left \lfloor#1 \right \rfloor}
\newcommand{\N}{{\mathbb{N}}}
\newcommand{\zo}{\{0,1\}}
\newcommand{\zs}{{\zo^\ast}}
\newcommand{\ci} {\stackrel{\rm{c}}{\equiv}}
\newcommand{\statclose} {\stackrel{\rm{s}}{\equiv}}
\newcommand{\poly}{\operatorname{poly}}
\newcommand{\negl}{\operatorname{neg}}
\renewcommand{\cref}{\Cref}
\newaliascnt{claiml}{theorem}
\newtheorem{claiml}[claiml]{Claim}
\renewenvironment{claim}{\begin{claiml}}{\end{claiml}}
\newtheorem{theorem}{Theorem}[section]
\newaliascnt{lemma}{theorem}
\newtheorem{lemma}[lemma]{Lemma}
\newaliascnt{claim}{theorem}
\newtheorem{claim}[claim]{Claim}
\newaliascnt{corollary}{theorem}
\newtheorem{corollary}[corollary]{Corollary}
\newaliascnt{proposition}{theorem}
\newtheorem{proposition}[proposition]{Proposition}
\newaliascnt{conjecture}{theorem}
\newaliascnt{definition}{theorem}
\newtheorem{definition}[definition]{Definition}
\newaliascnt{remark}{theorem}
\newaliascnt{example}{theorem}
\newtheorem{example}[example]{Example}
\crefname{lemma}{Lemma}{Lemmas}
\crefname{figure}{Figure}{Figures}
\crefname{claim}{Claim}{Claims}
\crefname{corollary}{Corollary}{Corollaries}
\crefname{proposition}{Proposition}{Propositions}
\crefname{conjecture}{Conjecture}{Conjectures}
\crefname{definition}{Definition}{Definitions}
\crefname{remark}{Remark}{Remarks}
\crefname{exmaple}{Example}{Examples}
\newaliascnt{construction}{theorem}
\crefname{construction}{Construction}{Constructions}
\newaliascnt{fact}{theorem}
\crefname{fact}{Fact}{Facts}
\newaliascnt{notation}{theorem}
\crefname{notation}{Notation}{Notation}
\crefname{equation}{Equation}{Equations}
\newaliascnt{proto}{theorem}
\newtheorem{proto}[proto]{Protocol}
\crefname{proto}{protocol}{protocols}
\newaliascnt{algo}{theorem}
\newtheorem{algo}[algo]{Algorithm}
\crefname{algo}{algorithm}{algorithms}
\newaliascnt{expr}{theorem}
\newtheorem{expr}[expr]{Experiment}
\crefname{experiment}{experiment}{experiments}
\def\FullBox{$\Box$}
\def\qed{\ifmmode\qquad\FullBox\else{\unskip\nobreak\hfil
\penalty50\hskip1em\null\nobreak\hfil\FullBox
\parfillskip=0pt\finalhyphendemerits=0\endgraf}\fi}
\def\qedsketch{\ifmmode\Box\else{\unskip\nobreak\hfil
\penalty50\hskip1em\null\nobreak\hfil$\Box$
\parfillskip=0pt\finalhyphendemerits=0\endgraf}\fi}
\renewcommand{\Pr}{{\mathrm {Pr}}}
\newcommand{\pr}[1]{\Pr\left[#1\right]}
\newcommand{\Ra}{\mathsf{R}}
\newcommand{\Rc}{\Ra}
\newcommand{\Ac}{\MathAlgX{A}}
\newcommand{\Dc}{\MathAlgX{D}}
\newcommand{\Bc}{\mathsf{B}}
\newcommand{\Cc}{\mathsf{C}}
\newcommand{\ens}[1]{\left\{#1\right\}}
\newcommand{\size}[1]{\left|#1\right|}
\newcommand{\Uni}{{\mathord{\mathcal{U}}}}
\newcommand{\prob}[1]{\mathsf{\textsc{#1}}}
\newcommand{\SD}{\prob{SD}}
\newcommand{\ppt}{{\sc ppt}\xspace}
\newcommand{\is}{{i^\ast}}
\newcommand{\Tableofcontents}{
\ifdefined\IsLLNCS \else
\thispagestyle{empty}
\pagenumbering{gobble}
\clearpage
\ifdefined\IsSubmission \else
\tableofcontents
\thispagestyle{empty}
\clearpage
\fi
\pagenumbering{arabic}
\fi
}
\newcommand{\vect}[1]{{ \boldsymbol #1}}
\newcommand{\vx}{\vect{x}}
\newcommand{\vw}{\vect{w}}
\newcommand{\party}[1]{%
    \IfEqCase{#1}{%
        {1}{\Ac}
        {2}{\Bc}
        {3}{\Cc}
    }[\PackageError{\party}{Undefined option to party: #1}{}]%
}%
\newcommand{\vparty}[2]{%
    #1_{\party{#2}}
}%
\newcommand{\wparty}[1]{%
    \vparty{w}{#1}
}%
\newcommand{\Adv}{\Dc}
\newcommand{\ptp}{{point-to-point}\xspace}
\newcommand{\secParam}{\kappa}
\newcommand{\Party}{\MathAlgX{P}}
\newcommand{\Pone}{{\party{1}}}
\newcommand{\Ptwo}{{\party{2}}}
\newcommand{\Pthree}{{\party{3}}}
\newcommand{\HalfP}{\frac n2}
\newcommand{\ThirdP}{\frac n3}
\newcommand{\CorSet}{{\mathcal{I}}}
\newcommand{\Sim}{\MathAlgX{S}}
\newcommand{\aux}{z}
\newcommand{\ys}{y^\ast}
\newcommand{\SMbox}[1]{\mbox{\scriptsize {\sc #1}}}
\newcommand{\REAL}{\SMbox{REAL}}
\newcommand{\IDEAL}{\SMbox{IDEAL}}
\newcommand{\abort}{\MathAlgX{abort}}
\newcommand{\bigbrack}{{\vphantom{2^{2^2}}}}
\mathchardef\mhyphen="2D
\newcommand{\titlename}{Characterization of Secure Multiparty Computation Without Broadcast}
\newcommand{\titlename}{Characterization of Secure Multiparty Computation\\ Without Broadcast}
\title{\titlename\footnote{This is the final draft of this paper. The full version was published in the Journal of Cryptology 2018 \cite{CHOR18}. An extended abstract of this work appeared in the Theory of Cryptography Conference (TCC) 2016-A \cite{CHOR16}.}
\Draft{\\{\small \sc Working Draft: Please Do Not Distribute}}
}
    \author{}
    \date{}
        \author{Ran Cohen\inst{1}\thanks{Work supported by {\sc the israel science foundation} (grant No.~189/11), the Ministry of Science, Technology and Space and by the National Cyber Bureau of Israel.}
        \and Iftach Haitner\inst{2}\thanks{Research supported by ERC starting grant 638121, ISF grant 1076/11, I-CORE grant 4/11, BSF grant 2010196, and Check Point Institute for Information Security.}
        \and Eran Omri\inst{3}\thanks{Research supported by ISF grant 544/13.}
        \and Lior Rotem\inst{4}}
        \institute{Department of Computer Science, Bar-Ilan University\\ \email{cohenrb@cs.biu.ac.il}
        \and School of Computer Science, Tel Aviv University\\ \email{iftachh@cs.tau.ac.il}
        \and Department of Computer Science and Mathematics, Ariel University\\ \email{omrier@ariel.ac.il}
        \and School of Computer Science, Tel Aviv University\\ \email{lior.rotem@gmail.com}}
        \author{Ran Cohen\thanks{Department of Computer Science, Bar-Ilan University. E-mail: \texttt{cohenrb@cs.biu.ac.il}. Work supported by {\sc the israel science foundation} (grant No.~189/11), the Ministry of Science, Technology and Space and by the National Cyber Bureau of Israel.}
        \and Iftach Haitner\thanks{School of Computer Science, Tel Aviv University. E-mail: \texttt{iftachh@cs.tau.ac.il}. Research supported by ERC starting grant 638121, ISF grant 1076/11, I-CORE grant 4/11, BSF grant 2010196, and Check Point Institute for Information Security.}
        \and Eran Omri\thanks{Department of Computer Science and Mathematics, Ariel University. E-mail: \texttt{omrier@ariel.ac.il}. Research supported by ISF grant 544/13.}
        \and Lior Rotem\thanks{School of Computer Science, Tel Aviv University. E-mail: \texttt{lior.rotem@gmail.com}.} }
\begin{document}
\sloppy
\maketitle
\begin{abstract}
  A major challenge in the study of cryptography is characterizing the necessary and sufficient assumptions required to carry out a given cryptographic task. The focus of this work is the necessity of a broadcast channel for securely computing symmetric functionalities (where all the parties receive the same output) when one third of the parties, or more, might be corrupted. Assuming all parties are connected via a peer-to-peer network, but no broadcast channel (nor a secure setup phase) is available, we prove the following characterization:

  \begin{itemize}
    \item
    A symmetric $n$-party functionality can be securely computed facing $n/3\le t<n/2$ corruptions (\ie honest majority), if and only if it is \emph{$(n-2t)$-dominated}; a functionality is $k$-dominated, if \emph{any} $k$-size subset of its input variables can be set to \emph{determine} its output.
  	
  	\item
    Assuming the existence of one-way functions, a symmetric $n$-party functionality can be securely computed facing $t\ge n/2$ corruptions (\ie no honest majority), if and only if it is $1$-dominated and can be securely computed with broadcast.
  \end{itemize}	

  It follows that, in case a third of the parties might be corrupted, broadcast is necessary for securely computing non-dominated functionalities (in which ``small'' subsets of the inputs cannot determine the output), including, as interesting special cases, the Boolean XOR and coin-flipping functionalities.
\end{abstract}

\noindent\textbf{Keywords: broadcast; point-to-point communication; multiparty computation; coin flipping; fairness; impossibility result.}

\Tableofcontents

\section{Introduction}
Broadcast (introduced by \citet{LamportSP82} as the Byzantine Generals problem) allows any party to deliver a message of its choice to all parties, such that all honest parties will receive the same message even if the broadcasting party is corrupted. Broadcast is an important resource for implementing secure multiparty computation. Indeed, much can be achieved when broadcast is available (hereafter, the broadcast model); in the computational setting, assuming the existence of oblivious transfer, every efficient functionality can be securely computed \emph{with abort},\footnote{An efficient attack in the real world is computationally indistinguishable, via a simulator, from an attack on an ``ideal computation'', in which malicious parties are allowed to prematurely abort.} facing an arbitrary number of corruptions \cite{Yao82,GoldreichMW87}. Some functionalities can be computed with \emph{full security},\footnote{The malicious parties in the ``ideal computation'' are \emph{not} allowed to prematurely abort.} \eg Boolean OR and three-party majority \cite{GordonK09}, or $1/p$-security,\footnote{The real model is $1/p$-indistinguishable from an ``ideal computation'' without abort.} \eg coin-flipping protocols \cite{MoranNS09,HaitnerT14}. In the information-theoretic setting, considering ideally-secure communication lines between the parties, every efficient functionality can be computed with full security against unbounded adversaries,\footnote{The real and ideal models are statistically close: indistinguishable even in the eyes of an all-powerful distinguisher.} facing any minority of corrupted parties \cite{RB89}.

The above drastically changes when broadcast or a secure setup phase are not available.\footnote{In case a secure setup phase is available, \emph{authenticated broadcast} can be computed facing $t<n$ corrupted parties; Authenticated broadcast exists in the computational setting over authenticated channels assuming one-way functions exist \cite{DolevS83} and in the information-theoretic setting over secure channels assuming a limited access to a broadcast channel in the offline phase \cite{PW92}.} Specifically, when considering multiparty protocols (involving more than two parties), in which the parties are connected only via a peer-to-peer network (hereafter, the \ptp model) and one third of the parties, or more, might be corrupted.\footnote{For two-party protocols, the broadcast model is equivalent to the \ptp model (and thus all the results mentioned in the broadcast model hold also in the \ptp model). If less than a third of the parties are corrupted, broadcast can be implemented using a protocol, and every functionality can be computed with information-theoretic security \cite{BGW88,ChaumCD88}.} Considering authenticated channels and assuming the existence of oblivious transfer, every efficient functionality can be securely computed with abort, facing an arbitrary number of corruptions \cite{FGHHS02}. In the full-security model, some important functionalities \emph{cannot} be securely computed (\eg Byzantine agreement \cite{PeaseSL80} and three-party majority \cite{CohenLindell14}), whereas other functionalities can (\eg \emph{weak} Byzantine agreement \cite{FGHHS02} and Boolean OR \cite{CohenLindell14}). The characterization of many other functionalities, however, was unknown. For instance, it was unknown whether the coin-flipping functionality or the Boolean XOR functionality can be computed with full securely, even when assuming an honest majority.

\subsection{Our Result}\label{sec:into:ourResult}
A protocol is $t$-\emph{consistent}, if in any execution of the protocol, in which at most $t$ parties are corrupted, \emph{all} honest parties output the same value. Our main technical result is the following attack on consistent protocols.

\begin{lemma}[main lemma, informal]\label{lem:mainLemmaInf}
Let $n\geq 3$, $t\geq\ThirdP$ and let $s=n-2t$ if $t<\HalfP$ and $s=1$ otherwise. Let $\pi$ be an efficient $n$-party, $t$-consistent protocol in the \ptp model with secure channels. Then, there exists an efficient adversary that by corrupting any $s$-size subset $\CorSet$ of the parties can do the following: first, before the execution of $\pi$, output a value $\ys = \ys(\CorSet)$. Second, during the execution of $\pi$, force the remaining honest parties to output $\ys$.
\end{lemma}
The lemma extends to {\sf expected} polynomial-time protocols, and to protocols that only guarantee consistency to hold with high probability. We prove the lemma by extending the well-known hexagon argument of \citet{FischerLM85}, originally used for proving the impossibility of reaching (strong and weak) Byzantine agreement in the \ptp model.

A corollary of \cref{lem:mainLemmaInf} is the following lower bound on symmetric functionalities (\ie all parties receive the same output value). A functionality is \emph{$k$-dominated}, if there exists an efficiently computable value $\ys$ such that \emph{any} $k$-size subset of the functionality input variables, can be manipulated to make the output of the functionality be $\ys$ (\eg the Boolean OR functionality is $1$-dominated with value $\ys=1$).

\begin{corollary}[Informal]\label{cor:mainLInf}
Let $n\geq 3$, $t\geq\ThirdP$, and let $s=n-2t$ if $t<\HalfP$ and $s=1$ otherwise. A symmetric $n$-party functionality that can be computed with full security in the \ptp model with secure channels, facing up to $t$ corruptions, is $s$-dominated.\footnote{Stating the lower bound in the secure-channels model is stronger than stating it in the authenticated-channels model, since if a functionality can be computed with authenticated channels then it can be computed with secure channels.}
\end{corollary}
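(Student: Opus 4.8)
The plan is to derive \cref{cor:mainLInf} from \cref{lem:mainLemmaInf} by a standard real/ideal argument. Let $f$ be a symmetric $n$-party functionality and let $\pi$ be an efficient protocol computing $f$ with full security facing up to $t$ corruptions, with $s=n-2t$ if $t<\HalfP$ and $s=1$ otherwise. First I would observe that full security implies $t$-consistency up to a negligible error: in the ideal computation of a \emph{symmetric} functionality the trusted party hands the same value $f(\cdot)$ to all parties (and, since full security forbids abort, it always does so), hence all honest parties agree on their output; as the real execution of $\pi$ is indistinguishable from this ideal execution, in any real execution with at most $t$ corrupted parties the honest parties output a common value except with negligible probability. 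Thus $\pi$ satisfies the hypothesis of the high-probability-consistency (and expected-polynomial-time) version of \cref{lem:mainLemmaInf}.

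Next, fix any $s$-size subset $\CorSet$ of the parties and let $\Ac$ be the efficient adversary promised by \cref{lem:mainLemmaInf}: corrupting $\CorSet$, it outputs an efficiently computable value $\ys=\ys(\CorSet)$ before the execution and then drives all honest parties to output $\ys$. I would then invoke the security of $\pi$ against $\Ac$: there is an ideal-world simulator $\Sim$ corrupting $\CorSet$ whose output jointly with the honest parties' outputs is indistinguishable from the real execution. Since $\Ac$ ignores the honest parties' inputs, the input vector $\mathbf a_{\CorSet}$ that $\Sim$ submits to the trusted party is a random variable independent of the honest inputs. In the ideal execution the honest parties output $f$ evaluated on their own inputs together with $\mathbf a_{\CorSet}$, and matching the real execution this equals $\ys$ with overwhelming probability, simultaneously for every choice $\vx$ of the honest parties' inputs. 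Extracting a fixed setting $a^\ast_{\CorSet}$ from the support of $\mathbf a_{\CorSet}$ with $f(\vx,a^\ast_{\CorSet})=\ys$ for all $\vx$ then says exactly that the $s$ input variables indexed by $\CorSet$ can be set to force the output to $\ys$, i.e.\ that $f$ is $s$-dominated with value $\ys$.

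Two points need care, and the first is the main obstacle: the quantifier exchange in the previous step. Security provides, for each fixed honest input $\vx$, that $f(\vx,\mathbf a_{\CorSet})=\ys$ with overwhelming probability, whereas domination asks for a single setting of $\CorSet$'s inputs that works for \emph{all} $\vx$. The hard part will be to reconcile these; I would do so via the asymptotic formal definition of domination (which itself allows the output to equal $\ys$ only with overwhelming probability for each fixed honest input), or, for deterministic functionalities, by an averaging argument restricted to one input length at a time, using that $\ys$ and $\Sim$ are fixed independently of $\vx$. The second point is that \cref{lem:mainLemmaInf} a priori yields a value $\ys(\CorSet)$ possibly depending on $\CorSet$, while domination needs one common $\ys$; since $3s\le n$ whenever $t\ge\ThirdP$ (and also for $s=1$, $n\ge 3$), any two $s$-subsets are either disjoint or can be compared through a third $s$-subset disjoint from both, and plugging the two extracted input-settings into $f$ on a common global input forces $\ys(\CorSet_1)=\ys(\CorSet_2)$. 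Combining these observations yields that $f$ is $s$-dominated with a single common value $\ys$, as claimed.
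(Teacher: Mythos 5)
Your approach matches the paper's proof of its \cref{lem:MPCImp} step for step: full security plus symmetry gives $(1-\negl,t)$-consistency; the attack lemma (\cref{lem:AdvStrictPoly}) yields an efficient adversary that forces a predetermined $\ys_\CorSet$; invoking the simulator and noting it can do nothing but choose corrupted-party inputs yields that some setting of the $\CorSet$-inputs pins the output to $\ys_\CorSet$ (\ie weak $s$-domination); and finally the $\CorSet$-dependence of $\ys_\CorSet$ is eliminated exactly as in the paper's \cref{claim:dominated}, using that $s\le n/3$ so any two $s$-subsets can be compared via a third disjoint one. So there is no genuinely different route here, and the second ``point needing care'' you raise is already handled in the paper by \cref{claim:dominated}.

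That said, your first ``point needing care'' is a real observation the paper does not address. The paper's proof simply asserts ``All $\Sim$ can do is to select the input values of the corrupted parties, hence, there must exist input values $\set{x_i}_{i\in\CorSet}$ that determine the output of the honest parties to be $\ys_\CorSet$.'' As you note, what security actually delivers is: for \emph{each fixed} honest-input vector $\vx$, $\Pr_{\mathbf a_\CorSet}[f(\vx,\mathbf a_\CorSet)=\ys_\CorSet]\ge 1-\negl(\secParam)$, where $\mathbf a_\CorSet$ is $\Sim$'s (randomized, but $\vx$-independent) choice. Passing to a single fixed $a^\ast_\CorSet$ satisfying $f(\vx,a^\ast_\CorSet)=\ys_\CorSet$ for \emph{all} $\vx$ is a quantifier exchange that a naive union bound over the (exponentially many) honest-input vectors does not give you. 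Your proposed repairs are not yet convincing as stated: re-defining domination with negligible error would change the theorem statement (the paper's \cref{def:dominated} is a strict for-all), and the ``averaging argument restricted to one input length'' runs into the same union-bound obstacle you identified, since $\ys_\CorSet$ and $\Sim$ being independent of $\vx$ does not shrink the number of honest-input vectors you must cover. So you have correctly located a subtlety the paper glosses over, but you have not closed it either; in the spirit of this ``informal'' corollary it may be acceptable to leave it at the paper's level of rigor, but you should be aware that this step is where the formal argument is thinnest.
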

Interestingly, the above lower bound is tight. \citet{CohenLindell14} (following \citet{FGHHS02}) showed that assuming one-way functions exist, any $1$-dominated functionality (\eg Boolean OR) that can be securely computed in the broadcast model with authenticated channels, can be securely computed in the \ptp model with authenticated channels. This shows tightness when an honest majority is not assumed. We generalize the approach of \cite{CohenLindell14}, using the two-threshold detectable precomputation of \citet{FHHW03}, to get the following upper bound.
\begin{proposition}[Informal]\label{thm:mainUBInf}
Let $n\geq 3$ and $\ThirdP\leq t<\HalfP$. Assuming up to $t$ corruptions, any efficient symmetric $n$-party functionality that is $(n-2t)$-dominated can be computed in the secure-channels \ptp model with information-theoretic security.
\end{proposition}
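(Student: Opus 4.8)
The plan is to reduce to the broadcast model and then \emph{remove} the broadcast channel via a detectable precomputation, generalizing the $s=1$ construction of \cite{CohenLindell14} (which followed \cite{FGHHS02}) to $s=n-2t$ by using the two-threshold detectable precomputation of \cite{FHHW03}. The reduction to the broadcast model is immediate: since $t<\HalfP$, the protocol of \cite{RB89} computes any efficient $n$-party functionality $f$ with full information-theoretic security over secure channels \emph{when a broadcast channel is available}. Hence the only obstacle is the absence of broadcast, and the regime $\ThirdP\le t<\HalfP$ is exactly the one in which broadcast cannot be realized from scratch (for $t<\ThirdP$ it can, by \cite{BGW88,ChaumCD88}); this regime also guarantees $s=n-2t\le t$, so that corrupting $s$ parties is within the adversary's budget and $s$-domination is a meaningful notion.

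Concretely, I would run a \emph{two-threshold detectable precomputation} with robustness threshold $\rho=n-2t-1=s-1$ and detection threshold $t$, producing an information-theoretic setup (a pseudo-signature infrastructure in the spirit of \cite{PW92}). Its guarantees are: (i) if at most $\rho$ parties are corrupted, the protocol never aborts and all honest parties obtain a correct setup; (ii) if at most $t$ parties are corrupted, then either all honest parties obtain a correct setup, or all honest parties jointly abort; and (iii) a correct setup enables a Dolev--Strong-style \cite{DolevS83} broadcast sub-protocol tolerating $t<\HalfP$ active corruptions with information-theoretic security. Two conditions must be checked. First, $\rho\le t$, that is, $t\ge(n-1)/3$, which follows from $t\ge\ThirdP$. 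Second, the feasibility bound of \cite{FHHW03} for an information-theoretic two-threshold detectable precomputation must hold for the pair $(\rho,t)$; substituting $\rho=n-2t-1$ this reduces to an inequality satisfied throughout $\ThirdP\le t<\HalfP$ (intuitively, $\rho+t=n-t-1<n$), and $t<\HalfP$ is what makes the post-setup broadcast information-theoretically realizable.

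The protocol for $f$ then proceeds in two phases. Phase~1 is the precomputation above; if its output is ``abort'', every honest party outputs $\ys$ (the value witnessing $s$-domination of $f$) and halts. Otherwise, Phase~2 runs the full-security protocol of \cite{RB89} for $f$ in the secure-channels model, with every broadcast replaced by the setup-based broadcast sub-protocol. For security I would compose the simulator of the precomputation with that of \cite{RB89}: since Phase~1 is input-independent, its simulation needs no knowledge of the honest inputs, and by the detection guarantee (against at most $t$ faults) the simulator learns whether the honest parties accept or abort. In the abort case, guarantee (i) implies that at least $\rho+1=s$ parties are corrupted, so by $s$-domination the simulator can fix the inputs of $s$ corrupted parties so that $f$ evaluates to $\ys$; it sends these (together with arbitrary inputs for the remaining corrupted parties) to the trusted party, and the honest parties in the ideal world receive $\ys$, matching the real execution. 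In the accept case, all honest parties hold a correct setup, the emulated broadcast is a faithful broadcast against at most $t<\HalfP$ faults, and security follows from that of the \cite{RB89} protocol by a standard composition argument.

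The main obstacle is instantiating \cite{FHHW03}'s two-threshold detectable precomputation with robustness threshold \emph{exactly} $s-1$ and detection threshold $t$ while obtaining a setup strong enough for $t$-resilient information-theoretic broadcast --- that is, checking that the hypotheses of \cite{FHHW03} translate to inequalities implied by $\ThirdP\le t<\HalfP$ --- and then carrying out the hybrid argument that glues the two simulators while correctly handling the adversarially chosen, yet (by detectability) honest-party-consistent, decision of whether and when to abort. The remaining pieces --- $s\le t$, the implication ``abort $\Rightarrow$ at least $s$ corruptions'', and the black-box reduction to \cite{RB89} --- are then largely bookkeeping.
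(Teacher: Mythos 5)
Your plan is conceptually right and tracks exactly what the paper announces in its introduction (``We generalize the approach of \cite{CohenLindell14}, using the two-threshold detectable precomputation of \cite{FHHW03}''), but the paper's actual proof takes a cleaner black-box route that you should compare with yours. Rather than using \cite{FHHW03} to detectably precompute a pseudo-signature setup, then build broadcast from it, and then run \cite{RB89} on top, the paper invokes \cite[Thm.~6]{FHHW03} directly as a \emph{two-threshold MPC protocol for $f$ itself} in the secure-channels \ptp model: with parameters $(t_1,t_2)$ satisfying $t_1\le t_2$ and $t_1+2t_2<n$, it computes $f$ with full security against $\le t_1$ corruptions and with fairness (all honest parties may jointly output $\bot$, but nothing leaks) against $\le t_2$. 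Setting $t_1=n-2t-1$ and $t_2=t$ (for which $t_1+2t_2=n-1<n$, not $\rho+t<n$ as you wrote --- your stated inequality is the wrong one, though both happen to be trivially true), the paper's protocol just runs that protocol and replaces any $\bot$ with $\ys$; the simulator, on seeing the inner simulator abort, infers $\ssize{\CorSet}\ge t_1+1=n-2t$ and substitutes the $s$-dominating inputs. This collapses your three-layer composition (detectable setup $\to$ pseudo-signature broadcast $\to$ \cite{RB89}) into a single wrapper around one sub-protocol, eliminating the hybrid argument you only gesture at for gluing the setup simulator, the broadcast emulation, and the \cite{RB89} simulator --- that is precisely the part of your write-up that is hand-waved and would need real work. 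Your route is not wrong, but it re-derives two-threshold MPC rather than citing it, and the composition it incurs is the hard part you leave to ``bookkeeping.'' One further detail: the paper treats $n-2t=1$ (i.e., $t_1=0$) as a separate case, dispatching it via \cite[Thm.~7]{CohenLindell14} and \cite{RB89} since $f$ is then $1$-dominated; your proposal doesn't flag this edge case, and it is worth verifying that whichever instantiation of \cite{FHHW03} you use is actually stated for $\rho=0$.
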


Combining \cref{cor:mainLInf}, \cref{thm:mainUBInf} and \cite[Thm.\ 7]{CohenLindell14}, yields the following characterization of symmetric functionalities.
\begin{theorem}[main theorem, informal]\label{thm:mainCharacterizationInf}
Let $n\geq 3$, $t\geq \ThirdP$ and let $f$ be an efficient symmetric $n$-party functionality.
\begin{enumerate}
   \item
   For $t<\HalfP$, $f$ can be $t$-securely computed (with information-theoretic security) in the secure-channels \ptp model, if and only if $f$ is $(n-2t)$-dominated.
   \item
   For $t\geq\HalfP$, assuming one-way functions exist, $f$ can be $t$-securely computed (with computational security) in the authenticated-channels \ptp model, if and only if $f$ is $1$-dominated and can be $t$-securely computed (with computational security) in the authenticated-channels broadcast model.
\end{enumerate}
\end{theorem}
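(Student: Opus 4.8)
The plan is to read off the characterization by combining the lower bound \cref{cor:mainLInf}, the upper bound \cref{thm:mainUBInf}, and the broadcast-to-\ptp compiler of \citet{CohenLindell14}, handling the two corruption regimes separately. For the honest-majority regime $\ThirdP\le t<\HalfP$ both directions are already in hand: the ``only if'' direction is exactly \cref{cor:mainLInf} instantiated with $s=n-2t$, stating that a symmetric $n$-party functionality which is $t$-securely computable with information-theoretic full security over secure channels in the \ptp model must be $(n-2t)$-dominated; and the ``if'' direction is exactly \cref{thm:mainUBInf}, stating that every efficient $(n-2t)$-dominated symmetric functionality is $t$-securely computable with information-theoretic security in the same model. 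All that has to be observed is that the domination parameter $s=n-2t$ delivered by the lower bound coincides with the one required by the upper bound, so the two results dovetail with no gap.

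For the no-honest-majority regime $t\ge\HalfP$, I would prove the ``only if'' direction in two pieces. Assume $f$ is $t$-securely computable with computational security over authenticated channels in the \ptp model. First, since any protocol that runs over authenticated channels also runs, with the same security guarantee, over secure channels, $f$ is in particular $t$-securely computable over secure channels in the \ptp model, whence \cref{cor:mainLInf} with $t\ge\HalfP$ (so $s=1$) gives that $f$ is $1$-dominated; note this half uses no computational assumption. Second, the broadcast model is just the \ptp model augmented with a broadcast channel, so the very same protocol---never invoking the broadcast channel---$t$-securely computes $f$ in the authenticated-channels broadcast model. For the ``if'' direction, suppose $f$ is $1$-dominated and $t$-securely computable with computational security over authenticated channels in the broadcast model; then, assuming one-way functions exist, \cite[Thm.~7]{CohenLindell14} compiles such a protocol into one that $t$-securely computes $f$ over authenticated channels in the \ptp model. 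This closes the equivalence.

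I expect no real obstacle in the assembly itself; the substance lies in the two ingredients it consumes, whose proofs are the hard part. The lower bound \cref{cor:mainLInf} rests on the main lemma \cref{lem:mainLemmaInf}: on extending the hexagon argument of \citet{FischerLM85} from Byzantine agreement to general $t\ge\ThirdP$ and to merely $t$-consistent protocols, so as to build an adversary that pre-commits to the honest-party output and then forces it by corrupting an $s$-size subset. The upper bound \cref{thm:mainUBInf} generalizes the \cite{CohenLindell14} construction using the two-threshold detectable precomputation of \citet{FHHW03}. The only subtlety in gluing everything together is keeping the models aligned: stating the impossibility over secure channels (the stronger statement, which therefore also rules out authenticated-channels protocols) and invoking the \cite{CohenLindell14} compiler precisely in the computational, authenticated-channels, broadcast-model setting where it is established.
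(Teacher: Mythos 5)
Your proposal is correct and follows essentially the same route as the paper: the honest-majority direction is the conjunction of \cref{cor:mainLInf} (formally \cref{lem:MPCImp}) and \cref{thm:mainUBInf} (formally \cref{prop:upperBound:honsetMajority}), and the no-honest-majority direction combines \cref{cor:mainLInf} with the \citet{CohenLindell14} broadcast-to-\ptp compiler, with the same model-alignment observations (secure channels dominate authenticated channels for the impossibility, and a \ptp protocol is trivially a broadcast-model protocol). No gap.
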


Another application of \cref{lem:mainLemmaInf} regards coin-flipping protocols. A coin-flipping protocol \cite{Blum81} allows the honest parties to jointly flip an unbiased coin, where even a coalition of (efficient) cheating parties cannot bias the outcome of the protocol by too much. We focus on protocols in which honest parties must output the same bit. Although \cref{thm:mainCharacterizationInf} shows that fully-secure coin flipping cannot be achieved facing one-third corruptions, we provide a stronger impossibility result under a weaker security requirement that only assumes $\ThirdP$-consistency and a non-trivial bias. In particular, we show that $1/p$-secure coin flipping cannot be achieved using consistent protocols in case a third of the parties might be corrupted.
\begin{corollary}[impossibility of many-party coin flipping in the \ptp model, informal]\label{cor:CFInf}
In the secure-channels \ptp model, there exists no $(n\geq3)$-party coin-flipping protocol that guarantees a non-trivial bias (\ie smaller than $\frac12$) against an efficient adversary controlling one third of the parties.
\end{corollary}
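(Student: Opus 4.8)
The plan is to obtain this as a direct consequence of \cref{lem:mainLemmaInf}. The observation that makes the corollary work is that any coin-flipping protocol of the kind considered here---one in which all honest parties are required to output the \emph{same} bit---is, in particular, a $\ThirdP$-consistent protocol. Hence we may apply the main lemma with $t=\ThirdP$ (assume $3\mid n$ for concreteness; in general take $t=\ceil{n/3}$, the least integer that is at least a third of the parties). Since $\ThirdP<\HalfP$ for every $n$, the lemma is used in its honest-majority regime, where $s=n-2t=\ThirdP$. It therefore provides an \emph{efficient} adversary $\cA$ that, corrupting a fixed $\ThirdP$-size subset $\CorSet$---exactly one third of the parties---first outputs a value $\ys=\ys(\CorSet)\in\zo$ before the protocol starts, and then during the execution forces every honest party to output $\ys$.

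The key point to extract is that $\ys(\CorSet)$ is fixed \emph{before any interaction takes place}: it depends only on the corrupted set (and on $\pi$, which is fixed). Consequently, an execution of $\pi$ against $\cA$ yields, with probability $1$, the predetermined honest output $b^\ast:=\ys(\CorSet)$; that is, $\cA$ completely determines the value of the ``coin''. So $\pr{\text{honest output}=b^\ast}=1$, which is a bias of $\half$. This contradicts the hypothesis that $\pi$ guarantees a non-trivial bias, \ie a bias smaller than $\half$, and the corollary follows.

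I do not expect a serious obstacle here---the entire difficulty is already carried by \cref{lem:mainLemmaInf}---but two points warrant care. First, one should check that ``one third of the parties'' is precisely the corruption budget $s=n-2t$ delivered by the lemma at $t=\ThirdP$, so that the attack stays within the allotted number of corruptions. Second, it is essential that $\ys$ is a deterministic, non-interactive choice of $\cA$: if $\ys$ were allowed to depend on the transcript, then ``forcing the output to $\ys$'' would be consistent with an unbiased coin and the argument would break down---but the statement of the main lemma (and its proof via the hexagon argument of \citet{FischerLM85}) guarantees exactly this. Finally, for the stated extensions---expected polynomial-time protocols, or protocols guaranteeing consistency only with high probability---the same reasoning gives $\pr{\text{honest output}=b^\ast}\ge 1-\negl(n)$, still larger than $\half+\delta$ for any fixed non-trivial bias $\delta<\half$, so the contradiction persists.
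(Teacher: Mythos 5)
There is a genuine gap: you treat the value $\ys=\ys(\CorSet)$ as a \emph{deterministic} quantity determined by the corrupted set and the protocol, and from that you conclude $\pr{\text{honest output}=b^\ast}=1$ and hence a bias of $\tfrac12$. But in \cref{lem:AdvStrictPoly} (the formal statement behind \cref{lem:mainLemmaInf}), $\ys$ is the output of a \emph{randomized} algorithm: in the first step the adversary draws fresh uniform random coins $\vw$ for the emulated ring $\Rc$, and $\ys$ is the output of $\Party^\ast$ in that emulation. So $\ys$ is a random variable over the adversary's coins; the notation $\ys(\CorSet)$ only indicates that it is fixed \emph{before the interaction with honest parties} and does not depend on their inputs or randomness, not that it is a constant. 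For a coin-flipping protocol, there is no reason this random $\ys$ should not itself be distributed close to uniform, in which case ``forcing all honest parties to output $\ys$'' induces no bias at all. Your own caveat --- that if $\ys$ depended on the transcript, ``forcing the output to $\ys$'' would be consistent with an unbiased coin --- is exactly the trap the argument falls into, just with the adversary's own randomness playing the role of the transcript.

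The paper repairs this with an extra rejection-sampling step: it treats $Y(\secParam)$ (the adversary's first-step output) as a random variable, notes that WLOG $\pr{Y(\secParam)=0}\le\tfrac12$ infinitely often, and defines $\Adv'$ that re-runs the first step up to $\secParam$ times until $\ys\ne 0$ (aborting otherwise). Then $\Adv'$ runs the second step with the selected $\vw$. This guarantees that except with probability $2^{-\secParam}$ the honest output is nonzero, which is a $\tfrac12-2^{-\secParam}$ deviation from uniform. One also needs (and the paper implicitly uses) that conditioning on $\ys\ne 0$, an event of probability at least $\tfrac12$, at most doubles the failure probability of the forcing step, so the attack still succeeds with all but negligible probability. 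Your proof is missing both the observation that $\ys$ is random and the rejection-sampling machinery required to actually bias the coin; as written, it does not establish the corollary.
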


The above is in contrast to the broadcast model, in which coin flipping can be computed with full security if an honest majority exists \cite{BD84,CGMA85}, and $1/p$-security when no honest majority is assumed \cite{Cleve86,BeimelOO10,HaitnerT14}.

\subsection{Our Technique}\label{sec:into:technique}
We present the ideas underlying our main technical result, showing that the following holds in the \ptp model. For any efficient consistent protocol involving more than two parties, if one third of the parties (or more) might be corrupted, then there exists an efficient adversary that can make the honest parties output a predetermined value. In the following discussion we focus on three-party protocols with a single corrupted party.

Let $\pi = (\party1,\party2,\party3)$ be an efficient $1$-consistent three-party protocol, and let $q$ be its round complexity on inputs of fixed length $\secParam$. Consider the following ring network $\Rc = (\party{1}^1,\party{2}^1,\party{3}^1,\ldots,\party{1}^{q},\party{2}^{q},\party{3}^q)$, where each two consecutive parties, as well as the first and last, are connected via a secure channel, and party $\Party^{j}$, for $\Party\in \set{\Pone,\Ptwo,\Pthree}$, has the code of $\Party$ (see \cref{fig:1}).\newcommand{\Mynode}[2]{
\node[draw, shape=circle,inner sep=\inSep,minimum width=\minwidth ] (#1) at ({360/\n * (#1 -1) }:\radius) {#2};
}
 \newcommand{\Myedge}[3]{
 \draw[#3, >=latex] ({360/\n * (#2)+\margin}:\radius)
    arc ({360/\n * (#2)+\margin}:{360/\n * (#1)-\margin}:\radius);
    }
\newcommand{\Mydraw}[2]{\Mynode{#1}{#2};\Myedge{#1}{#1-1}{-};}
\def\m{q}
\def \minwidth {0.9cm}
\def\inSep{1pt}
\def\ScaleFactor{.7}
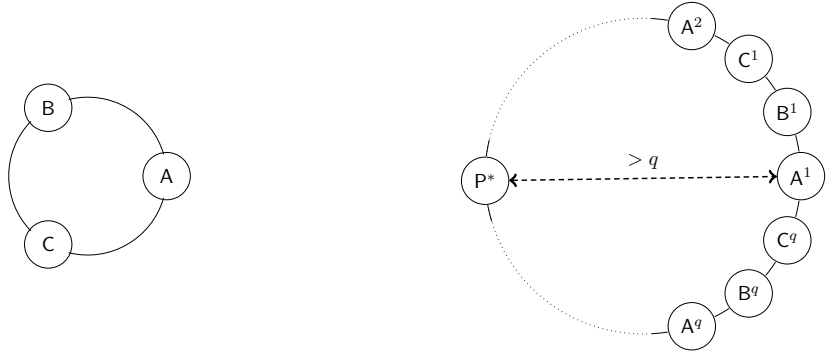
\begin{figure}
\centering
\begin{subfigure}[h]{0.35\linewidth}
\centering
\scalebox{\ScaleFactor}
{
\begin{tikzpicture}

\def \n {3}

\def \radius {1.5cm}
\def \margin {16} 

\foreach \s in {1,...,\n}
{

\Mydraw{\s}{$\party{\s}$}
}
\end{tikzpicture}
}
\end{subfigure}
\remove{
\begin{subfigure}[h]{0.25\linewidth}
\centering
\scalebox{\ScaleFactor}
{
\begin{tikzpicture}

\def \n {6}
\def \radius {1.5cm}
\def \outradius {2.2cm}
\def \margin {16} 

\foreach \s in {1,...,3}
{
\Mydraw{\s}{$\party{\s}^1$};
\node[draw=none]  at ({360/\n * (\s -1) }:\outradius) {0};

\Mydraw{\s+3}{$\party{\s}^1$};
\node[draw=none]  at ({360/\n * (\s +3 -1) }:\outradius) {1};
}

\end{tikzpicture}
}

\subcaption{\scriptsize Hexagon ---  Two copies of $\pi$ concatenated. First copy with common input $0$ and second with $1$.\label{fig:Hexagon}}
\end{subfigure}
}
\begin{subfigure}[h]{0.50\linewidth}
\centering

\scalebox{\ScaleFactor}
{
\begin{tikzpicture}
\def \n {15}
\def \radius {3cm}
\def \margin {9} 


\Myedge{12}{11}{-}

\Mydraw{13}{$\party{1}^\m$}
\Mydraw{14}{$\party{2}^\m$}
\Mydraw{15}{$\party{3}^\m$}
\Mydraw{1}{$\party{1}^1$}
\Mydraw{2}{$\party{2}^1$}
\Mydraw{3}{$\party{3}^1$}
\Mydraw{4}{$\party{1}^2$}


\def\LHSLoc{7.57}

\node[draw, shape=circle,inner sep=\inSep, minimum width=\minwidth ] (LHS) at ({360/\n * (\LHSLoc) }:\radius) {$\Party^\ast$};

\Myedge{\LHSLoc+1}{\LHSLoc}{-}
\Myedge{\LHSLoc}{\LHSLoc-1}{-}

\Myedge{11.7}{\LHSLoc}{dotted}
\Myedge{\LHSLoc}{3.25}{dotted}



\draw[thick,densely  dashed,decoration={markings, mark=at position 0.03 with {\arrow [ ultra thick] {<}},mark=at position 1 with {\arrow [ ultra thick] {>}}}, postaction={decorate}]  (1) -- node [above=1pt] {$>q$} (LHS);

\end{tikzpicture}
}
\end{subfigure}
\caption{\scriptsize The original $3$-party protocol $\pi = (\Ac,\Bc,\Cc)$ is on the left. On the right is the $3q$-Ring --- $q$ copies of $\pi$ concatenated. Communication time between parties of opposite sides is larger than $3q/2 > q$. \label{fig:1}}
\end{figure}

Consider an execution of $\Rc$ on input $\vw =(\wparty{1}^1,\wparty{2}^1,\wparty{3}^1,\ldots,\wparty{1}^q,\wparty{2}^q,\wparty{3}^q) \in (\zo^\secParam)^{3q}$ (\ie party $\Party^i$ has input $w_{\Party}^i$, containing its actual input and random coins). A key observation is that the view of party $\party{1}^{j}$, for instance, in this execution, is a \emph{valid} view of the party $\party{1}$ on input $\wparty{1}^{j}$ in an interaction of $\pi$ in which $\party{2}$ acts honestly on input $\wparty{2}^{j}$. It is also a valid view of $\party{1}$, on input $\wparty{1}^{j}$, in an interaction of $\pi$ in which $\party{3}$ acts honestly on input $\wparty{3}^{j-1 \pmod q}$. Hence, the consistency of $\pi$ yields that any two consecutive parties in $\Rc$ output the same value, and thus \emph{all} parties of $\Rc$ output the \emph{same} value.

Consider for concreteness an attack on the parties $\set{\Pone,\Ptwo}$. The efficient adversary \Adv first selects a value $\vw \in (\zo^\secParam)^{3q}$, emulates (in its head) an execution of $\Rc$ on $\vw$, and sets $\ys$ to be the output of the party $\Party^\ast = \party{1}^{q/2}$ in this execution. To interact with the parties $\set{\Pone,\Ptwo}$ in $\pi$, the adversary \Adv corrupts party $\Pthree$ and emulates an execution of $\Rc$, in which all but $\set{\party{1}^1,\party{2}^1}$ have their inputs according to $\vw$ (the roles of all parties but $\set{\party{1}^1,\party{2}^1}$ are played by the corrupted $\Pthree$), and $\set{\Pone,\Ptwo}$ take (without knowing it) the roles of $\set{\party{1}^1,\party{2}^1}$.

We claim that the output of $\set{\Pone,\Ptwo}$ under the above attack is $\ys$. Observe that the emulation of $\Rc$, induced by the interaction of \Adv with $\set{\Pone,\Ptwo}$, is just a valid execution of $\Rc$ on some input $\vw'$ (not completely known to the adversary). Hence, by the above observation, all parties in $\Rc$ (including $\set{\Pone,\Ptwo}$) output the same value at the end of this emulation. Since the execution of $\Rc$ ends after at most $q$ rounds, and since the number of communication links between $\set{\party{1}^1,\party{2}^1}$ and $\Party^\ast$ is $\approx 3q/2 > q$, the actions of $\set{\party{1}^1,\party{2}^1}$ have \emph{no effect} on the view of $\Party^\ast$. In particular, the output of $\Party^\ast$ in the attack is also $\ys$, and by the above this is also the output of $\set{\Pone,\Ptwo}$.

\paragraph{Extension to expected polynomial-time protocols.}
The above attack works perfectly if $\pi$ runs in (strict) polynomial time. For expected polynomial-time protocols, one has to work slightly harder to come up with an attack that is (almost) as good.

Let $q$ be the expected round complexity of $\pi$. That is, an honest party of $\pi$ halts after $q$ rounds in expectation, regardless of what the other parties do, where the expectation is over its random coins. Consider the ring $\Rc = (\party{1}^1,\party{2}^1,\party{3}^1,\ldots,\party{1}^{m},\party{2}^{m},\party{3}^m)$, for $m =2q$. By Markov bound, in a random execution of $\Rc$, a party halts after $m$ rounds with probability at least $\frac12$.

The adversary \Adv attacking $\set{\Pone,\Ptwo}$ is defined as follows. For choosing a value for $\ys$, it emulates an execution of $\Rc$ on arbitrary inputs and uniformly-distributed random coins. If the party $\Party^\ast = \party{1}^{m/2}$ halts in at most $m$ rounds, \Adv sets $\ys$ to be $\Party^\ast$'s output, and continues to the second stage of the attack. Otherwise, it emulates $\Rc$ on new inputs and random coins. Note that in $k$ attempts, \Adv finds a good execution with probably (at least) $1 - 2^{-k}$. After finding $\ys$, the adversary \Adv continues as in the strict polynomial case discussed above.

The key observation here is that in the emulated execution of $\Rc$, induced by the interaction of \Adv with $\set{\Pone,\Ptwo}$, the party $\Party^\ast$ \emph{never} interacts in more than $m$ communication rounds. Therefore, again, being far from $\set{\Pone,\Ptwo}$, their actions do not affect $\Party^\ast$ in the first $m$ rounds, and so do not affect it at all. Hence, $\Party^\ast$ outputs $\ys$ also in the induced execution, and so do the parties $\set{\Pone,\Ptwo}$.

\subsection{Additional Related Work}\label{sec:relatedWork}
\paragraph{Negative results.}
In their seminal work, \citet{LamportSP82} defined the problem of simulating a broadcast channel in the \ptp model in terms of the Byzantine agreement problem. They showed that a broadcast protocol exists if and only if more than two-thirds of the parties are honest. \citet{Lamport83} defined the weak Byzantine agreement problem, and showed that even this weak variant of agreement cannot be computed, using deterministic protocols, facing one-third corruptions. \citet{FischerLM85} presented simpler proofs to the above impossibility results using the so-called hexagon argument, which is also the basis of our lower bound (see \cref{sec:into:technique}). They assumed a protocol exists for the three-party case, and composed multiple copies of this protocol into a ring system that contains an internal conflict. Since the ring system cannot exist, it follows that the three-party protocol does not exist. We remark that the result of \cite{FischerLM85} extends to public-coins protocols, where parties have access to a common random string. It follows that coin flipping is not sufficient for solving Byzantine agreement, and thus the impossibility result for coin flipping stated in \cref{cor:CFInf} is not implied by the aforementioned impossibility of Byzantine agreement.

\citet{CohenLindell14} analyzed the relation between security in the broadcast model and security in the \ptp model, and showed that some (non $1$-dominated) functionalities, \eg three-party majority, that can be computed in the broadcast model cannot be securely computed in the \ptp model, since they imply the existence of broadcast.

\paragraph{Positive results.}
If the model is augmented with a trusted setup phase, \eg a public-key infrastructure (PKI), then Byzantine agreement can be computed facing any number of corrupted parties \cite{LamportSP82}. \citet{PW92} presented an information-theoretic broadcast protocol assuming a temporary broadcast channel is available during the setup phase. \citet{FGHHS02} presented a probabilistic protocol that securely computes weak Byzantine agreement facing an arbitrary number of corrupted parties. \citet{CohenLindell14} showed (using the protocol from \cite{FGHHS02}) that assuming the existence of one-way functions, any $1$-dominated functionality that can be securely computed in the broadcast model, can also be securely computed in the \ptp model.

\citet{GoldwasserL02} presented a weaker definition for MPC without agreement, in which non-unanimous abort is permitted, \ie some of the honest parties may receive output while other honest parties might abort. Using this weaker definition, they utilized non-consistent protocols and constructed secure protocols in the \ptp model, assuming an arbitrary number of corrupted parties.

\subsection{Open Questions}
Our result for the non honest-majority case (second item of \cref{thm:mainCharacterizationInf}), requires the existence of one-way functions. In particular, given a protocol $\pi$ for computing a $1$-dominated functionality $f$ with full security in the broadcast model, one-way functions are used for compiling $\pi$ into a protocol for computing $f$ with full security in the \ptp model.\footnote{For some trivial functionalities, \eg constant functions, there exist information-theoretically secure protocols in the \ptp model that are not based on such a compilation, and this extra assumption is not needed.} It might be, however, that the existence of such a broadcast-model protocol (for non-trivial functionalities) implies the existence of one-way functions, and thus adding this extra assumption is not needed.

A different interesting challenge is characterizing which \emph{non}-symmetric functionalities can be computed in the \ptp model, in the spirit of what we do here for symmetric functionalities. For example, can a three-party coin flipping in which only two parties learn the outcome coin, be computed with full security facing a single corruption?

\subsection*{Paper Organization}
Basic definitions can be found in \cref{sec:Preliminaries}. Our attack is described in \cref{sec:Attack}, and its applications are given in \cref{sec:Application}. The characterization is presented in \cref{sec:Characterization}.

\section{Preliminaries}\label{sec:Preliminaries}
\subsection{Notations}\label{sec:notations}
We use calligraphic letters to denote sets, uppercase for random variables, lowercase for values, boldface for vectors, and sans-serif (\eg \Ac) for algorithms (\ie Turing Machines).
For $n\in\N$, let $[n]=\set{1,\cdots,n}$. Let $\poly$ denote the set all positive polynomials and let \ppt denote a probabilistic algorithm that runs in \emph{strictly} polynomial time. A function $\nu \colon \N \mapsto [0,1]$ is \textit{negligible}, denoted $\nu(\secParam) = \negl(\secParam)$, if $\nu(\secParam)<1/p(\secParam)$ for every $p\in\poly$ and large enough $\secParam$.

The statistical distance between two random variables $X$ and $Y$over a finite set $\Uni$, denoted $\SD(X,Y)$, is defined as $\frac12 \cdot \sum_{u\in \Uni}\size{\pr{X = u}- \pr{Y = u}}$. We say that $X$ and $Y$ are $\delta$-close if $\SD(X,Y)\le \delta$ and statistically close (denoted $X\statclose Y$) is they are $\delta$-close and $\delta$ is negligible.

Two distribution ensembles $X=\set{X(a,\secParam)}_{a\in\zs,\secParam\in\N}$ and $Y=\set{Y(a,\secParam)}_{a\in\zs,\secParam\in\N}$ are computationally indistinguishable (denoted $X\ci Y$) if for every non-uniform polynomial-time distinguisher $\Adv$ there exists a function $\nu(\secParam) = \negl(\secParam)$, such that for every $a\in\zs$ and all sufficiently large $\secParam$'s
$$
\abs{\pr{\Adv(X(a,\secParam),1^\secParam)=1} - \pr{\Adv(Y(a,\secParam),1^\secParam)=1}}\leq \nu(\secParam).
$$

\subsection{Protocols}\label{sec:Protocols}
An $n$-party protocol $\pi= (\Party_1,\ldots,\Party_n)$ is an $n$-tuple of probabilistic interactive TMs. The term \emph{party $\Party_i$} refers to the $i$'th interactive TM. Each party $\Party_i$ starts with input $x_i\in\zs$ and random coins $r_i\in\zs$. Without loss of generality, the input length of each party is assumed to be the security parameter $\secParam$. An \emph{adversary} \Adv is another interactive TM describing the behavior of the corrupted parties. It starts the execution with input that contains the identities of the corrupted parties and their private inputs, and possibly an additional auxiliary input.
The parties execute the protocol in a synchronous network. That is, the execution proceeds in rounds: each round consists of a \emph{send phase} (where parties send their message from this round) followed by a \emph{receive phase} (where they receive messages from other parties).

In the \emph{point-to-point (communication) model}, which is the one we assume by default, all parties are connected via a \emph{fully-connected point-to-point network}. We consider two models for the communication lines between the parties: In the \emph{authenticated-channels} model, the communication lines are assumed to be ideally authenticated but not private (and thus the adversary cannot modify messages sent between two honest parties but can read them). In the \emph{secure-channels} model, the communication lines are assumed to be ideally private (and thus the adversary cannot read or modify messages sent between two honest parties). In the \emph{broadcast model}, all parties are given access to a physical broadcast channel in addition to the point-to-point network. In both models, no preprocessing phase is available.

Throughout the execution of the protocol, all the honest parties follow the instructions of the prescribed protocol, whereas the corrupted parties receive their instructions from the adversary. The adversary is considered to be \emph{malicious}, meaning that it can instruct the corrupted parties to deviate from the protocol in any arbitrary way. At the conclusion of the execution, the honest parties output their prescribed output from the protocol, the corrupted parties output nothing and the adversary outputs an (arbitrary) function of its view of the computation (containing the views of the corrupted parties). The view of a party in a given execution of the protocol consists of its input, its random coins, and the messages it sees throughout this execution.

\subsubsection{Time and Round Complexity}
We consider both \emph{strict} and \emph{expected} bounds on time and round complexity.

\begin{definition}[time complexity]\label{def:protTC}
Protocol $\pi= (\Party_1,\ldots,\Party_n)$ is a {\sf $T$-time protocol}, if for every $i \in [n]$ and every input $x_i\in\zs$, random coins $r_i\in\zs$, and sequence of messages $\Party_i$ receives during the course of the protocol, the running time of an honest party $\Party_i$ is at most $T(|x_i|)$. If $T\in \poly$, then $\pi$ is of {\sf (strict) polynomial time}.

Protocol $\pi$ has an {\sf expected running time $T$}, if for every $i \in [n]$, every input $x_i\in\zs$ and sequence of messages $\Party_i$ receives during the course of the protocol, the expected running time of an honest party $\Party_i$, over its random coins $r_i$, is at most $T(\size{x_i})$. If $T\in \poly$, then $\pi$ has {\sf expected polynomial running time}.
\end{definition}

\begin{definition}[round complexity]\label{def:protRC}
Protocol $\pi= (\Party_1,\ldots,\Party_n)$ is a {\sf $q$-round protocol}, if for every $i \in [n]$ and every input $x_i\in\zs$, random coins $r_i\in\zs$, and sequence of messages $\Party_i$ receives during the course of the protocol, the round number in which an honest party $\Party_i$ stops being active (\ie stops sending and receiving messages) is at most $q(|x_i|)$. If $q\in \poly$, then $\pi$ has {\sf (strict) polynomial round complexity}.

Protocol $\pi$ has an {\sf expected round complexity $q$}, if for every $i \in [n]$, every input $x_i\in\zs$ and sequence of messages $\Party_i$ receives during the course of the protocol, the expected round number in which an honest party $\Party_i$ stops being active, over its random coins $r_i$, is at most $q(\size{x_i})$. If $q\in \poly$, then $\pi$ has {\sf expected polynomial round complexity}.
\end{definition}

\section{Attacking Consistent Protocols}\label{sec:Attack}
In this section, we present a lower bound for secure protocols in the secure-channels \ptp model. Protocols in consideration are only assumed to have a very mild security property (discussing the more standard notion of security is deferred to \cref{sec:Application}). Specifically, we only require the protocol to be consistent -- all honest parties output the same value. We emphasize that in a consistent protocol, a party may output the special error symbol $\bot$ (\ie abort), but it can only do so if all honest parties output $\bot$ as well.

\begin{definition}[consistent protocols]\label{def:consistency}
A protocol $\pi$ is {\sf $(\delta,t)$-consistent} against $C$-class (\eg polynomial-time, expected polynomial-time) adversaries, if the following holds. Consider an execution of $\pi$ on security parameter $\secParam$, and any vector of inputs of length $\secParam$ for the parties, in which a $C$-class adversary controls at most $t$ parties. Then with probability at least $\delta(\secParam)$, all honest parties output the \emph{same} value, where the probability is taken over the random coins of the adversary and of the honest parties.
\end{definition}

\ifdefined\IsCR
We now present an attack on consistent protocols whose round complexity is strictly bounded.
An extension of the attack to consistent protocols with a bound on their \emph{expected} number of rounds appears in the full version of this paper \cite{CHOR15}.
\else
In \cref{sec:LB:strict} we present an attack on consistent protocols whose round complexity is strictly bounded, and in \cref{sec:LB:expected} we extend the attack to consistent protocols with a bound on their \emph{expected} number of rounds.

\subsection{Protocols of Strict Running-Time Guarantee}\label{sec:LB:strict}
\fi

\begin{lemma}\label{lem:AdvStrictPoly}
Let $n\ge 3$, let $t\geq \ThirdP$, and let $s=n-2t$ if $t<\HalfP$ and $s=1$ otherwise. Let $\pi$ be an $n$-party, $T$-time, $q$-round protocol in the secure-channels \ptp model that is $(1-\delta,t)$-consistent against $(T_\Adv = 2nqT)$-time adversaries. Then, there exists a $T_\Adv$-time adversary $\Adv$ such that given the control over any $s$-size subset $\CorSet$ of parties, the following holds: on security parameter $\secParam$, \Adv first outputs a value $\ys = \ys(\CorSet)$. Next, \Adv interacts with the remaining honest parties of $\pi$ on inputs of length $\secParam$, and except for probability at most $\left(\frac32 \cdot q(\secParam)+1\right)\cdot \delta(\secParam)$, the output of {\sf every} honest party in this execution is $\ys$.\footnote{We would get slightly better parameters using an attack in which at least one honest party (but not necessarily all) outputs $\ys$.}
\end{lemma}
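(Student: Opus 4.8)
The plan is to formalize the hexagon/ring argument sketched in \cref{sec:into:technique}, generalizing it from three parties to $n$ parties with $s$-size corrupted subsets. First I would set up the ring network. Given the $n$-party protocol $\pi$, partition $[n]$ into three (roughly equal) blocks $B_1, B_2, B_3$ of sizes $\ge t-s+1$ each; this is where $t \ge n/3$ (and the definition of $s$) is used — it guarantees that two of the blocks together contain at most $n - s$ parties, hence the ``glue'' parties in the ring are each honest-from-the-perspective-of-consistency, and a single block has size at most $t$. Then build the ring $\Rc$ consisting of $q$ consecutive copies of $\pi$, where copy $j$ contributes parties $\Party^{(j)}_1, \ldots, \Party^{(j)}_n$, wired so that consecutive copies share a ``seam'': the parties of block $B_3$ in copy $j$ are identified with (play the same role toward their neighbors as) the parties of block $B_1$ in copy $j{+}1$, or some such identification that makes the diameter of the ring $\approx \tfrac32 nq$ while keeping the ring's total ``circumference'' in rounds larger than $q$. (The precise seam gluing is a bookkeeping detail; the point is that each party in the ring has a left-neighborhood and a right-neighborhood each of which is a legitimate partial view of $\pi$.)

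Second, I would prove the \emph{consistency propagation} claim: in any honest execution of $\Rc$ on any input vector $\vw$, all parties in the ring output the same value, \emph{except with probability at most $(\text{number of copies})\cdot\delta \le (\tfrac32 q + 1)\delta$} by a union bound over the copies. The per-copy argument is exactly the three-party observation made in the technique section, lifted to blocks: the view of the parties of one block in the ring, together with their two adjacent blocks, forms a valid transcript of an execution of $\pi$ in which at most $t$ parties (one block's worth on the ``far side'') are controlled adversarially; $(1-\delta,t)$-consistency then forces the honest parties in that window — in particular, the parties straddling a seam — to agree, and chaining the agreements around the ring gives global agreement. One must be slightly careful that the adversarial-time bound $T_\Adv = 2nqT$ is enough for this emulated adversary: it runs $\le nq$ party-instances each for $\le T$ steps, with a factor $2$ slack, so the bound holds.

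Third, I would define the actual adversary $\Adv$ and analyze it. $\Adv$ picks an arbitrary input vector and fresh random coins for all $nq$ ring-instances, emulates $\Rc$ entirely in its head, and sets $\ys$ to be the output of a fixed ``antipodal'' instance $\Party^\ast$ lying at ring-distance $> q$ from the corrupted window $\CorSet$ (which occupies the $B_1$-block of copy $1$; here is where $|\CorSet| = s$ matters — $\CorSet$ must be exactly one block's worth of parties in copy $1$, which requires $s \le$ the smallest block size, true by the choice of block sizes). Then $\Adv$, controlling $\CorSet$, interacts with the real honest parties by having them unknowingly play the roles of the $B_1$-block of copy $1$ while $\Adv$ internally simulates every other ring-instance using the pre-chosen coins. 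The key point: the induced global object is a bona fide honest execution of $\Rc$ on \emph{some} input vector $\vw'$ (the honest parties' real inputs and coins filling the $\CorSet$ slots), so by the consistency-propagation claim all ring parties — including the real honest parties and including $\Party^\ast$ — output a common value with probability $\ge 1 - (\tfrac32 q + 1)\delta$. But since $\Party^\ast$ is more than $q$ rounds away from $\CorSet$ and $\pi$ (hence $\Rc$) halts within $q$ rounds, the messages originating at $\CorSet$ never reach $\Party^\ast$; therefore $\Party^\ast$'s view, and so its output, is identical in the head-emulation and in the real attack, namely $\ys$. Combining, every honest party outputs $\ys$ except with probability at most $(\tfrac32 q + 1)\delta$, as required. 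Finally note $\ys = \ys(\CorSet)$ depends only on which block $\Adv$ corrupts, since everything else is internal to $\Adv$.

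\textbf{Main obstacle.} The routine part is the adversary definition and the distance counting; the genuinely delicate part is step one — choosing the block partition and the seam-gluing so that \emph{simultaneously} (i) every ring party's two-sided neighborhood is a valid $\pi$-view with $\le t$ adversarial parties (needs $t \ge n/3$ and the seam to overlap exactly one block), (ii) the ring diameter from $\CorSet$ to $\Party^\ast$ strictly exceeds the round bound $q$ (needs enough copies, $\approx \tfrac32 q$, and forces the $3$-block structure rather than $2$), and (iii) $|\CorSet| = s = n-2t$ (resp.\ $1$) is no larger than a block, so that a single adversary-controlled window suffices. Getting these three constraints to hold together is the crux; once the combinatorial geometry of the ring is pinned down, the probability analysis is a union bound plus the observation that far-away perturbations are invisible.
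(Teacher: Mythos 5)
Your proposal follows essentially the paper's strategy — the Fischer--Lynch--Merritt ring argument, lifted from individual parties to blocks of size at most $t$ — so the core idea is sound, and it would work with the bookkeeping tightened. The paper organizes it a bit more modularly: it first proves a clean three-party version (a ring of $3q$ single-party instances, with the corrupted party simulating everything outside two adjacent ring positions), and then reduces the $n$-party case to it by grouping $[n]$ into blocks $(\CorSet_{\Pone'},\CorSet_{\Ptwo'},\CorSet)$ of sizes $(t,t,s)$ and having each block act as one ``virtual'' three-party participant; the $(1-\delta,t)$-consistency of $\pi$ then directly gives $(1-\delta,1)$-consistency of the virtual protocol, and the three-party lemma applies verbatim. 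You unfold this into a single $n$-party ring of blocks, which amounts to the same argument.

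Two details in your write-up are off, though, and are worth correcting. First, the block-size condition ``each block has size $\ge t-s+1$'' is not the right constraint and is actually false for the natural partition: the constraints you actually need are that each block has size \emph{at most} $t$ (so that, from the joint view of two consecutive blocks, the remaining parties number at most $t$ and consistency applies) and that one block has size exactly $s$ (to house $\CorSet$); the partition $(t,t,s)$ achieves both, but when $t$ is close to $\HalfP$ the $s$-block has $s=n-2t$ parties, which can be far below $t-s+1$. Second, your description of who plays what is self-contradictory: you say $\CorSet$ occupies the $B_1$-block of copy $1$, and a sentence later say the honest parties ``unknowingly play the roles of the $B_1$-block of copy $1$.'' The right picture is that the $n-s$ honest parties of $\pi$ play the roles of the two blocks of size $t$ in copy $1$, while $\Adv$ (controlling only $\CorSet$, of size $s$) simulates every other block in the ring. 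Finally, the extra $+1$ in the bound $(\tfrac32 q+1)\delta$ is accounted for in the paper by one additional application of $(1-\delta,t)$-consistency of $\pi$ itself, needed because the three-party ring lemma only pins down the output of one representative party per block; your ``union bound over the copies'' phrasing does not surface this final step, so make it explicit.
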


For a polynomial-time protocol that is $(1-\negl,t)$-consistent against \ppt adversaries and assuming an honest majority, \cref{lem:AdvStrictPoly} yields a \ppt adversary that by controlling $n-2t$ of the parties can manipulate the outputs of the honest parties (\ie forcing them all to be $\ys$) with all but a negligible probability. If an honest majority is not assumed, the adversary can manipulate the outputs of the honest parties, by controlling any single party, except for a negligible probability.

We start by proving the lemma for three-party protocols, and later prove the multiparty case using a reduction to the three-party case. We actually prove a stronger statement for the three-party case, where the value $\ys$ is independent of the set of corrupted parties.
\begin{lemma}\label{lem:AdvStrictPoly_threeparty}
Let $\pi$ be a $3$-party, $q$-round protocol in the secure-channels \ptp model, let $T$ be the combined running-time of all three parties.\footnote{This is more general than $T$-time $3$-party protocols, as it captures asymmetry between the running time of the parties; this measure will turn out to be useful for proving \cref{lem:AdvStrictPoly}.} If $\pi$ is $(1-\delta,1)$-consistent against $(T_\Adv = 2qT)$-time adversaries, then there exists a $T_\Adv$-time adversary $\Adv$ such that the following holds. On security parameter $\secParam$, \Adv first outputs a value $\ys$. Next, given the control over \emph{any} non-empty set of parties, \Adv interacts with the remaining honest parties of $\pi$ on inputs of length $\secParam$, and except for probability at most $\frac32 \cdot q(\secParam)\cdot \delta(\secParam)$, the output of {\sf every} honest party in this execution is $\ys$.
\end{lemma}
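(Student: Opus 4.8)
The plan is to formalize the ring-network argument sketched in \cref{sec:into:technique}. Fix the three-party, $q$-round protocol $\pi=(\Pone,\Ptwo,\Pthree)$. On security parameter $\secParam$ set $m = q(\secParam)$ and consider the ring $\Rc$ consisting of $3m$ nodes $\party{1}^1,\party{2}^1,\party{3}^1,\ldots,\party{1}^m,\party{2}^m,\party{3}^m$ arranged on a cycle, where node $\Party^j$ runs the code of $\Party\in\set{\Pone,\Ptwo,\Pthree}$ and is connected by secure channels to its two cyclic neighbors (so $\party{2}^j$ talks to $\party{1}^j$ and $\party{3}^j$, and $\party{3}^m$ wraps around to $\party{1}^1$). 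The first step is the \emph{local-indistinguishability claim}: in any execution of $\Rc$ on an input/coins vector $\vw\in(\zo^\secParam)^{3m}$, the view of each node $\Party^j$ is identically distributed to the view of party $\Party$ in an honest execution of $\pi$ in which $\Party$'s two neighbors in $\pi$ play honestly with the inputs/coins that $\Party^j$'s two neighbors in $\Rc$ hold. This is essentially by construction: each node sends and receives exactly the messages $\pi$ prescribes given the messages it sees, and the synchronous round structure lines up. The second step uses $(1-\delta,1)$-consistency: since node $\Party^j$'s view is a view of an honest execution of $\pi$ with exactly one ``party'' (the rest of the ring, viewed from one side) possibly deviating — more precisely, for any two cyclically adjacent nodes, one can exhibit a single corrupted party of $\pi$ whose behavior reproduces the rest of the ring — consistency forces the outputs of any two adjacent nodes in $\Rc$ to agree except with probability $\le\delta(\secParam)$. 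Union-bounding around the $3m$ edges of the cycle, \emph{all} $3m$ nodes output the same value except with probability $\le 3m\cdot\delta(\secParam) = 3q(\secParam)\cdot\delta(\secParam)$. (A more careful accounting, chaining only along the $\approx 3m/2$ edges of one semicircle between the ``far'' node and the attacked pair, gives the $\tfrac32 q\delta$ bound stated.)

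Next I would define the adversary $\Adv$. Before the interaction, $\Adv$ picks an arbitrary $\vw$, emulates $\Rc$ on $\vw$ entirely in its head, and outputs $\ys := $ the output of the ``far'' node $\Party^\ast := \party{1}^{\lceil m/2\rceil}$ (any node roughly antipodal to $\party{1}^1,\party{2}^1$ works). Now suppose $\Adv$ is told to corrupt some nonempty $\CorSet\subseteq\set{\Pone,\Ptwo,\Pthree}$; I would handle the three cases symmetrically, so take $\CorSet=\set{\Pthree}$ for concreteness (the cases $\set{\Pone}$, $\set{\Ptwo}$ are identical up to relabeling, and larger $\CorSet$ is only easier). $\Adv$ internally runs the ring $\Rc$ on the coins fixed in $\vw$ for all nodes \emph{except} $\party{1}^1$ and $\party{2}^1$, whose roles it lets the two honest external parties $\Pone,\Ptwo$ of $\pi$ play: messages from $\Pone$ to $\Adv$ are fed into the ring as $\party{1}^1$'s messages to $\party{2}^1$ and to $\party{3}^m$, messages from $\Ptwo$ as $\party{2}^1$'s messages to $\party{1}^1$ and $\party{3}^1$, and conversely $\Adv$ delivers to $\Pone$ (resp.\ $\Ptwo$), in its guise as the corrupted $\Pthree$, exactly the messages that $\party{3}^m$ and $\party{2}^1$ (resp.\ $\party{3}^1$ and $\party{1}^1$) would send in the ring. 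The point is that the joint transcript produced is \emph{exactly} a sample of an honest execution of $\Rc$ on some vector $\vw'$ agreeing with $\vw$ outside coordinates $\party{1}^1,\party{2}^1$ (the latter two coordinates being whatever inputs/coins $\Pone,\Ptwo$ happen to use). Hence, by the ring analysis above, except with probability $\le\tfrac32 q(\secParam)\delta(\secParam)$ every node of $\Rc$ — in particular $\Party^\ast$ and the nodes $\party{1}^1,\party{2}^1$, i.e.\ the honest parties $\Pone,\Ptwo$ of $\pi$ — outputs the same value.

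It remains to argue this common value is the preselected $\ys$, which is the \emph{one genuinely new ingredient} and the step I'd be most careful about: we need the output of $\Party^\ast$ in the attack-induced execution to equal its output in $\Adv$'s head-emulation. This follows from a distance/causality argument. The graph distance in $\Rc$ between $\{\party{1}^1,\party{2}^1\}$ and $\Party^\ast=\party{1}^{\lceil m/2\rceil}$ is at least $\lceil m/2\rceil - 1 \ge m/2 - 1$ along each side of the cycle, and a standard induction on rounds shows that in a synchronous protocol the view of a node through round $r$ depends only on the inputs/coins of nodes at graph-distance $\le r$; since $\Party^\ast$ (running honest code of $\Pone$ in $\pi$) stops after at most $q(\secParam)=m$ rounds, its entire view depends only on coordinates of $\vw'$ within distance $m$. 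But wait — $m/2 - 1 < m$, so I need to be more precise: the relevant quantity is that the \emph{two} perturbed coordinates $\party{1}^1,\party{2}^1$ sit at distance $>m$ from $\Party^\ast$ when we count correctly around the ring of circumference $3m$ — the two arcs from $\{\party{1}^1,\party{2}^1\}$ to $\Party^\ast$ have lengths $\approx m/2$ and $\approx 5m/2$, and information from the perturbed nodes reaching $\Party^\ast$ along the \emph{short} arc would take $\approx m/2$ rounds, which is $<m$. The correct fix (and this is exactly why $m$ must be taken as the round bound and $\Party^\ast$ placed antipodally relative to the \emph{whole} perturbed region, not just one node) is to observe that what actually matters is when a node \emph{halts}: $\party{1}^1$ and $\party{2}^1$ halt after $\le m$ rounds, so any message they could inject that differs from the head-emulation can only be a "post-halt" message, which honest parties ignore; equivalently, I should choose the far node and use the bound $m>q$ so that messages originating at the perturbed nodes after they could possibly have diverged never reach $\Party^\ast$ before it halts. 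I would make this rigorous by an explicit round-by-round coupling between the head-emulation of $\Rc$ on $\vw$ and the attack-induced execution, showing that the views of all nodes at distance $>q$ from the perturbed set are identical in the two, hence $\Party^\ast$'s output — and therefore, by the consistency union bound above, the output of $\Pone$ and $\Ptwo$ — equals $\ys$ except with probability $\le\tfrac32 q(\secParam)\delta(\secParam)$. Finally, the running time of $\Adv$ is bounded by the cost of emulating $3m = 3q(\secParam)$ copies of $\pi$ twice over (once to fix $\ys$, once during the interaction), i.e.\ $O(q(\secParam)\cdot T(\secParam))$, which is within the claimed $T_\Adv = 2qT$ bound; and the main technical obstacle, as flagged, is the careful causality/coupling argument pinning the far node's output to the preselected $\ys$ rather than merely to "some common value."
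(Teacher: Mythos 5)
Your overall plan is the same as the paper's: build the ring $\Rc$ of $3m$ nodes, prove pairwise consistency of adjacent ring nodes from the $(1-\delta,1)$-consistency of $\pi$, pre-compute $\ys$ as the output of a far node $\Party^\ast$, and use a causality argument to show $\Party^\ast$'s view is unchanged when the attacked pair plays $\set{\party{1}^1,\party{2}^1}$. That is exactly the paper's argument.

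However, the place you flag as problematic and then try to ``fix'' is actually fine, and the confusion stems from an indexing error. The ring has $3m$ nodes, and $\Party^\ast=\party{1}^{\lceil m/2\rceil}$ sits at ring position $3\lceil m/2\rceil-2$, not at position $\lceil m/2\rceil$: each copy index $j$ occupies three consecutive ring positions. Hence the graph distance from $\set{\party{1}^1,\party{2}^1}$ (positions $1,2$) to $\Party^\ast$ is roughly $3m/2$ on one arc and roughly $3m/2$ on the other, not ``$\approx m/2$ and $\approx 5m/2$'' as you wrote. Since $3m/2 - O(1) > m = q$ (for all but trivially small $m$), the plain round-by-round coupling already gives that the views of all nodes within graph distance $>q$ of the perturbed pair --- in particular $\Party^\ast$ --- are identical in the head-emulation on $\vw$ and in the attack-induced execution. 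No additional idea is needed, and this is precisely what the paper uses.

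Worse, the ``fix'' you propose is not just unnecessary but based on incorrect reasoning. You argue that ``$\party{1}^1,\party{2}^1$ halt after $\le m$ rounds, so any message they could inject that differs from the head-emulation can only be a post-halt message.'' This is false: the attacked parties have different inputs and coins from the corresponding coordinates of $\vw$, so their messages differ from the head-emulation starting already in round $1$, well before they halt. The only thing keeping $\Party^\ast$'s view fixed is that these different messages simply cannot reach $\Party^\ast$ within $q$ rounds, because the graph distance exceeds $q$; were the short-arc distance actually $\approx m/2 < m$ as you feared, the attack would genuinely fail, and nothing about halting would rescue it. Once the distance is computed correctly the argument closes exactly as in the paper (and your chaining bound along $\approx 3m/2$ edges gives the stated $\tfrac32 q\delta$). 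One minor additional remark: the lemma requires a single $\ys$ that works for \emph{every} nonempty corrupted set, so you should note, as the paper does, that $\ys$ is computed before the corrupted set is known, and that the relabeling when $\Pone$ or $\Ptwo$ is corrupted leaves $\ys$ unchanged.
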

\begin{proof}
We fix the input-length parameter $\secParam$ and omit it from the notation when its value is clear from the context.
Let $\pi = (\Pone,\Ptwo,\Pthree)$ and let $m=q$ (assume for ease of notation that $m$ is even). Consider, without loss of generality, that a single party is corrupted (the case of two corrupted parties follows from the proof) and assume for concreteness that the corrupted party is $\Pthree$.
Consider the following ring network $\Rc = (\party{1}^1,\party{2}^1,\party{3}^1,\ldots,\party{1}^{m},\party{2}^{m},\party{3}^m)$, in which each two consecutive parties, as well as the first and last, are connected via a secure channel, and party $\Party^{j}$, for $\Party\in \set{\Pone,\Ptwo,\Pthree}$, has the code of $\Party$. Let $v = \secParam + T(\secParam)$, and consider an execution of $\Rc$ with arbitrary inputs and uniformly-distributed random coins for the parties being $\vw = (\wparty{1}^1,\wparty{2}^1,\wparty{3}^1,\ldots,\wparty{1}^m,\wparty{2}^m,\wparty{3}^m) \in (\zo^v)^{3m}$ (\ie party $\Party^i$ has input $w_{\Party}^i$, containing its actual input and random coins).

A key observation is that the point of view of the party $\party{1}^{j}$, for instance, in such an execution, is a \emph{valid} view of the party $\party{1}$ on input $\wparty1^{j}$ in an execution of $\pi$ in which $\party{2}$ acts honestly on input $\wparty2^{j}$. It is also a valid view of $\party{1}$, on input $\wparty1^{j}$, in an execution of $\pi$ in which $\party{3}$ acts honestly on input $\wparty3^{j-1 \pmod m}$. This observation yields the following consistency property of $\Rc$.
\begin{claim}\label{claim:AdvStrictPoly}
Consider an execution of $\Rc$ on joint input $\vw\in (\zo^v)^{3m}$, where the parties' coins in $\vw$ are chosen uniformly at random, and the parties' (actual) inputs are chosen arbitrarily. Then parties of distance $d$ in $\Rc$, measured by the (minimal) number of communication links between them, as well as all $d-1$ parties between them, output the same value with probability at least $1 - d \delta$.
\end{claim}

\begin{proof}
Consider the pair of neighboring parties $\set{\Pone^j,\Ptwo^j}$ in the ring $\Rc$ (an analogous argument holds for any two neighboring parties). Let \Adv be an adversary, controlling the party $\Pthree$ of $\pi$ that interacts with $\set{\Pone,\Ptwo}$ by emulating an execution of $\Rc$ on arbitrary inputs and uniform random coins (apart from the roles of $\set{\Pone^j,\Ptwo^j}$), and let $\set{\Pone, \Ptwo}$ take (without knowing that) the roles of $\set{\Pone^j,\Ptwo^j}$ in this execution. The joint view of $\set{\Pone, \Ptwo}$ in this emulation has the same distribution as the joint view of $\set{\Pone^j,\Ptwo^j}$ in an execution of $\Rc$ with uniform random coins. Hence, the $(1-\delta)$-consistency of $\pi$ yields that $\Pone^j$ and $\Ptwo^j$ output the same value in an execution of $\Rc$ on $\vw\in (\zo^v)^{3m}$ (where the random coins within $\vw$ of each party are chosen uniformly at random) with probability at least $1-\delta$. The proof follows by a union bound.
\end{proof}

The adversary \Adv first selects a value for $\vw\in (\zo^v)^{3m}$, consisting of arbitrary input values (\eg zeros) and uniformly-distributed random coins, and sets $\ys$ to be the output of $\Party^\ast = \party{1}^{m/2}$ in the execution of $\Rc$ on $\vw$. To interact with $\set{\Pone,\Ptwo}$ in $\pi$, \Adv emulates an execution of $\Rc$ in which all but $\set{\party{1}^1,\party{2}^1}$ have their inputs according to $\vw$, and $\set{\Pone,\Ptwo}$ take the roles of $\set{\party{1}^1,\party{2}^1}$. The key observation is that the view of party $\Party^\ast$ in the emulation induced by the above attack, is the \emph{same} as its view in the execution of $\Rc$ on $\vw$ (regardless of the inputs of $\set{\Pone,\Ptwo}$). This is true since the execution of $\Rc$ ends after at most $m$ communication rounds. Thus, the actions of $\set{\Pone,\Ptwo}$ have no effect on the view of $\Party^\ast$, and therefore the output of $\Party^\ast$ is $\ys$ also in the emulated execution of $\Rc$. Finally, since all the parties in the emulated execution of $\Rc$ have uniformly-distributed random coins, and since the distance between $\Party^\ast$ and $\set{\Pone,\Ptwo}$ is (less than) $\frac{3m}2$, \cref{claim:AdvStrictPoly} yields that with probability at least $1- \frac{3m}2 \cdot \delta$, the output of $\set{\Pone,\Ptwo}$ under the above attack is $\ys$.

Note that the value $\ys$ does not depend on the identity of the corrupted party, since in the first step $\ys$ is set independently of $\Pthree$, and in the second step the attack follows without any change when the honest parties play the roles of $\set{\Ptwo^1,\Pthree^1}$ if $\Pone$ is corrupted or $\set{\Pone^2,\Pthree^1}$ if $\Ptwo$ is corrupted.
\end{proof}

We now proceed to prove \cref{lem:AdvStrictPoly} in the many-party case.
\begin{proof}
Let $\pi = (\Party_1,\ldots,\Party_n)$ be a $T$-time, $q$-round, $n$-party protocol that is $(1-\delta,t)$-consistent against $2nqT$-time adversaries. We will show an adversary that by controlling any $s$ corrupted parties, manipulates all honest parties to output a predetermine value. We separately handle the case that $\ThirdP\leq t<\HalfP$ and the case $\HalfP\leq t<n$.

\paragraph{Case $\ThirdP\leq t<\HalfP$.}
 Let $\CorSet\subseteq[n]$ be a subset of size $s= n-2t$, representing the indices of the corrupted parties in $\pi$. Consider the three-party protocol $\pi'=(\Pone',\Ptwo',\Pthree')$, defined by partitioning the set $[n]$ into three subsets $\set{\CorSet_{\Pone'},\CorSet_{\Ptwo'},\CorSet}$, where $\CorSet_{\Pone'}$ and $\CorSet_{\Ptwo'}$ are each of size $t$, and letting $\Pone'$ run the parties $\set{\Party_i}_{i\in\CorSet_{\Pone'}}$, $\Ptwo'$ run the parties $\set{\Party_i}_{i\in\CorSet_{\Ptwo'}}$ and $\Pthree'$ run the parties $\set{\Party_i}_{i\in\CorSet}$. Each of the parties in $\pi'$ waits until all the virtual parties it is running halt, arbitrarily selects one of them and outputs the virtual party's output value.

Since the subsets $\CorSet_{\Pone'}, \CorSet_{\Ptwo'}, \CorSet$ are of size at most $t$, the $q$-round, $3$-party protocol $\pi'$ is $(1-\delta,1)$-consistent against $2nqT$-adversaries (otherwise there exists a $2nqT$-time adversary against the consistency of $\pi$, corrupting at most $t$ parties). In addition, since the combined time complexity of all three parties is $nT$, by \cref{lem:AdvStrictPoly_threeparty} there exists a $2nqT$-time adversary $\Adv'$ that first determines a value $\ys$, and later, given control over any party in $\pi'$ (in particular $\Pthree'$), can force the two honest parties to output $\ys$ with probability at least $1-\frac{3q\delta}2$.

The attacker $\Adv$ for $\pi$, controlling the parties indexed by $\CorSet$, is defined as follows: In the first step, $\Adv$ runs $\Adv'$ and outputs the value $\ys$ that $\Adv'$ outputs. In the second step, $\Adv$ interacts with the honest parties in $\pi$ by simulating the parties $\set{\Pone',\Ptwo'}$ to $\Adv'$, \ie $\Adv$ runs $\Adv'$ and sends every message it receives from $\Adv'$ to the corresponding honest party in $\pi$, and similarly, whenever $\Adv$ receives a message from an honest party in $\pi$ it forwards it to $\Adv'$. It is immediate that there exists $i\in\CorSet_{\Pone'}$ such that $\Party_i$ outputs $\ys$ in the execution of $\pi$ with the same probability that $\Pone'$ outputs $\ys$ in the execution of $\pi'$, \ie with probability at least $1-\frac{3q\delta}2$. From the consistency property of $\pi$, all honest parties output the same value with probability at least $1- \delta$, and using the union bound we conclude that the output of all honest parties in $\pi$ under the above attack is $\ys$ with probability at least $1-(\frac{3q\delta}2 +\delta)$.

\paragraph{Case $\HalfP\leq t<n$.} Let $\is \in [n]$ be the index of the corrupted party in $\pi$ and consider the three-party protocol $\pi'=(\Pone',\Ptwo',\Pthree')$ defined by partitioning the set $[n]$ into three subsets $\set{\CorSet_{\Pone'},\CorSet_{\Ptwo'},\set{\is}}$, for $\size{\CorSet_{\Pone'}}=\ceil{\frac{n-1}2}$ and $\size{\CorSet_{\Ptwo'}}=\floor{\frac{n-1}2}$. As in the previous case, the size of each subset $\CorSet_{\Pone'}, \CorSet_{\Ptwo'}, \set{\is}$ is at most $t$, and the proof proceeds as above.
\end{proof}

\ifdefined\IsCR
\else
\subsection{Protocols of Expected Running-Time Guarantee}\label{sec:LB:expected}
In this section we extend the attack presented above to consistent protocols with bound on their \emph{expected} number of rounds.

\begin{lemma}\label{lem:AdvExpectedPoly}
Let $n\ge 3$, let $t\geq \ThirdP$, let $s=n-2t$ if $t<\HalfP$ and $s=1$ otherwise, and let $z$ be an integer function. Let $\pi$ be an $n$-party protocol of expected running time $T$ and expected round complexity $q$ in the secure-channels \ptp model, that is $(1-\delta,t)$-consistent against adversaries of expected running time $T_\Adv = 2n(z+1)qT$. Then, there exists an adversary $\Adv$ of expected running-time $T_\Adv$ such that given the control over any $s$-size subset $\CorSet$ of parties, the following holds: on security parameter $\secParam$, \Adv first outputs a value $\ys = \ys(\CorSet)$. Next, \Adv interacts with the remaining honest parties of $\pi$ on inputs of length $\secParam$, and except for probability at most $2\cdot \left(3\cdot q(\secParam) + 1\right) \cdot \delta(\secParam) + 2^{-z(\secParam)}$, the output of {\sf every} honest party in this execution is $\ys$.
\end{lemma}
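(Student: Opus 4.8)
The plan closely follows the proof of \cref{lem:AdvStrictPoly} --- first the three-party case, then the many-party case via the same partition-of-$[n]$-into-three-groups reduction --- the one new ingredient being that, without a worst-case bound on the round complexity, the adversary cannot read $\ys$ off a fixed round of a ring execution. Instead it will rejection-sample a ring execution in which the designated party halts within the first $m$ rounds, and the work is to show that this conditioning is essentially harmless.

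Concretely, for the three-party case let $\pi=(\Pone,\Ptwo,\Pthree)$ have expected round complexity $q$ and combined expected running time $T$; we may assume $q\geq 5$ (otherwise replace $q$ by this larger upper bound). Form the ring $\Rc$ of $m=2q$ columns exactly as in \cref{lem:AdvStrictPoly_threeparty}, set $\Party^\ast=\party{1}^{m/2}$, and let $E$ be the event --- over an execution of $\Rc$ with uniformly random coins --- that $\Party^\ast$ halts within $m$ communication rounds; by \cref{def:protRC} and Markov's inequality, $\Pr[E]\geq\tfrac12$. The structural fact replacing \cref{claim:AdvStrictPoly} is that, \emph{conditioned on $E$}, any two parties of $\Rc$ at distance $d$ (and all $d-1$ parties between them) output the same value except with probability at most $2d\delta$: by the emulation argument proving \cref{claim:AdvStrictPoly} (whose emulating adversary against $\pi$ runs in expected time $O(qT)\leq T_\Adv$) a fixed consecutive pair disagrees with probability at most $\delta$ unconditionally, hence at most $\delta/\Pr[E]\leq 2\delta$ after conditioning on $E$; a union bound over the $d$ pairs gives the claim.

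The adversary $\Adv$ then has a \emph{sampling phase} --- up to $z$ times: emulate $\Rc$ on fresh uniform coins for $m$ rounds, and if $\Party^\ast$ has halted, record all virtual parties' coins $\vw$, set $\ys$ to $\Party^\ast$'s output, and stop (if all $z$ attempts fail, an event of probability at most $2^{-z}$, give up) --- and an \emph{attack phase} that, exactly as in \cref{lem:AdvStrictPoly_threeparty}, corrupts $\Pthree$ and emulates $\Rc$ with all virtual parties but $\set{\party{1}^1,\party{2}^1}$ running on the recorded coins $\vw$, the honest parties of $\pi$ unknowingly playing $\set{\party{1}^1,\party{2}^1}$. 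Three points finish the three-party case. First, since $\Party^\ast$ is at ring-distance $3q-4>m$ from $\set{\party{1}^1,\party{2}^1}$, the event $E$ does not depend on the coins of those two parties, so --- conditioned on the sampling phase succeeding --- the execution of $\Rc$ induced by the attack is distributed exactly as a uniform-coin execution of $\Rc$ conditioned on $E$. Second, by the same distance bound the messages of $\set{\party{1}^1,\party{2}^1}$ never reach $\Party^\ast$ during its (at most $m$) active rounds, so $\Party^\ast$'s view, and hence its output $\ys$, is unchanged from the recorded run. Third, the conditioned consistency claim applied along the shortest $\Party^\ast$-to-$\set{\party{1}^1,\party{2}^1}$ path (of length less than $3m/2=3q$) shows that every party on it, in particular the two honest parties of $\pi$, outputs $\ys$ except with probability less than $6q\delta$; as before, $\ys$ is independent of which nonempty subset is corrupted.

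For the many-party case, run the reduction from the proof of \cref{lem:AdvStrictPoly} unchanged: partitioning $[n]$ into three groups of size at most $t$ yields a three-party $\pi'$ that is $(1-\delta,1)$-consistent against expected-$T_\Adv$-time adversaries with expected round complexity $q$ and combined expected running time $nT$, so the three-party adversary above has expected running time $O(n(z+1)qT)\leq T_\Adv$ (the at most $z$ sampling attempts plus one attack emulation, each costing $O(nqT)$ in expectation), and combining the $O(q\delta)$ path error, the $2^{-z}$ sampling-failure error and the extra $\delta$ (at most $2\delta$ under the conditioning) from $\pi$'s own $(1-\delta,t)$-consistency yields the stated bound $2(3q+1)\delta+2^{-z}$. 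The step I expect to demand the most care is the first of the three points above: verifying that the rejection sampling only conditions on $E$ and that $E$ is independent of the honest parties' inputs and coins --- this is exactly what keeps both the insulation of $\Party^\ast$ and the ring-consistency claim intact up to the factor-$2$ loss --- while simultaneously checking that the capped resampling leaves $\Adv$'s \emph{expected} running time within $T_\Adv$.
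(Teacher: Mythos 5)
Your proof is correct and takes essentially the same route as the paper's: form the double-length ring $\Rc$ with $m=2q$ columns, rejection-sample a uniform execution of $\Rc$ until $\Party^\ast=\party{1}^{m/2}$ halts within $m$ rounds (failure probability $2^{-z}$ by Markov), fix $\ys$ to $\Party^\ast$'s output, run the embedding attack of \cref{lem:AdvStrictPoly_threeparty} with the recorded coins, and absorb the conditioning into a factor-of-$2$ loss on each ring-consistency term. The one refinement you add beyond the paper's write-up is the explicit observation that the halting event $E$ is independent of the coins of $\set{\party{1}^1,\party{2}^1}$ because $\Party^\ast$ is at ring-distance exceeding $m$; the paper invokes "at most doubles the probability of inconsistency" without spelling out that this independence is what makes the attack-induced distribution \emph{exactly} the uniform ring distribution conditioned on $E$, so your version is slightly cleaner at that step. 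Two minor quibbles that do not affect correctness: the "extra $\delta$ becomes $2\delta$ under the conditioning" is not actually needed --- the union bound with $\pi$'s own $(1-\delta,t)$-consistency contributes $\delta$, not $2\delta$, so the true error is $(6q+1)\delta+2^{-z}$ and the lemma's $2(3q+1)\delta+2^{-z}$ is simply a slightly loose (still valid) upper bound --- and the $q\geq 5$ assumption you introduce to make the distance inequality $3q-4>m$ hold should be stated once as a harmless normalization (the paper also implicitly relies on it).
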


\begin{proof}
We prove the lemma for the three-party case, the proof for the general case is similar to the proof of \cref{lem:AdvStrictPoly}. We fix the input-length parameter $\secParam$ and omit it from the notation when clear from the context.
	
Let $\pi = (\Pone,\Ptwo,\Pthree)$ and let $m=2q$. Similarly to the proof of \cref{lem:AdvStrictPoly_threeparty}, we consider the (now double size) ring $\Rc = (\party{1}^1,\party{2}^1,\party{3}^1,\ldots,\party{1}^{m},\party{2}^{m},\party{3}^m)$. The attacker \Adv follows in similar lines to those used in the proof of \cref{lem:AdvStrictPoly_threeparty}. The main difference is that in order to select $\ys$, \Adv iterates the following for $z$ times. In each iteration, \Adv emulates an execution of the ring $\Rc$ on arbitrary inputs and uniformly-distributed random coins,\footnote{Note that now we have no a priori bound on the number of random coins used by the parties. Yet, the emulation can be done in expected time $nmT$.} for $m$ communication rounds. If during one of these iterations party $\Party^\ast = \Pone^{m/2}$ halts, $\Adv$ sets $\ys$ to be its output in this iteration. Otherwise, in case the value $\ys$ was not set during all $z$ iterations, $\Adv$ outputs $\perp$ and aborts. The attack continues as in the proof of \cref{lem:AdvStrictPoly_threeparty}.
	
To analyze the above attack, we first present an upper bound on the probability that \Adv aborts.
\begin{claim}\label{claim:probAbort}
$\pr{\Adv \mbox{ aborts }} \le 2^{-z}$.
\end{claim}
\begin{proof}
By Markov bound, the probability that in a single iteration of \Adv the party $\Party^\ast$ does not halt within $m=2q$ rounds, is at most $\frac12$. Therefore, the probability that $\ys$ is not set in all $z$ iterations is at most $2^{-z}$.
\end{proof}
Since $\Party^\ast$ halts in the iteration that produced $\ys$ within $m$ rounds, its view in the emulated execution of $\Rc$ induced by the attack, is the \emph{same} as in this selected iteration (this holds even though some parties might run for more rounds in the emulated execution). In particular, $\Party^\ast$ outputs $\ys$ also in the emulated execution. The proof continues as in the proof of \cref{lem:AdvStrictPoly_threeparty}, where the only additional subtlety is that it is no longer true that the random coins of the parties in the emulated execution induced by the attack are uniformly distributed. Indeed, we have selected a value for $\vw$ that causes $\Party^\ast$ to halt within $m$ rounds. Yet, since in a random execution of $\Rc$, party $\Party^\ast$ halts within $m$ rounds with probability at least $\frac12$, the method used to sample $\vw$ at most doubles the probability of inconsistency in the ring. It follows that the attacked parties $\set{\Pone,\Ptwo}$ output $\ys$ with probability at least $1 - 2\cdot 3q\delta $ times the probability that \Adv does not abort, and the proof of the lemma follows.
\end{proof}
\fi
\section{Impossibility Results for Secure Computation}\label{sec:Application}
In this section, we present applications of the attack of \cref{sec:Attack} to secure multiparty computations in the secure-channels \ptp model.\footnote{Note that a lower bound in the secure-channels model is stronger than in the authenticated-channels model.} In \cref{sec:RealIdeal}, we show that the only symmetric functionalities that can be securely realized, according to the real/ideal paradigm, in the presence of $n/3 \le t< n/2$ corrupted parties (\ie honest majority), are $(n-2t)$-dominated functionalities. The only symmetric functionalities that can be securely realized in the presence of $n/2 \le t< n$ corrupted parties (\ie no honest majority), are $1$-dominated functionalities. In \cref{sec:CF}, we show that non-trivial $(n>3)$-party coin-flipping protocols, in which the honest parties must output a bit, are impossible when facing $t \ge n/3$ corrupted parties.

For concreteness, we focus on strict polynomial-time protocols secure against strict polynomial-time adversaries, but all the results readily extend to the expected polynomial-time regime.

\subsection{Symmetric Functionalities Secure According to the Real/Ideal Paradigm}\label{sec:RealIdeal}
The model of secure computation we consider is defined in \cref{sec:RealIdeal:Def}, dominated functionalities are defined in \cref{sec:RealIdeal:dominated} and the impossibility results are stated and proved in \cref{sec:RealIdeal:LB}.

\subsubsection{Model Definition}\label{sec:RealIdeal:Def}

We provide the basic definitions for secure multiparty computation according to the real/ideal paradigm, for further details see \cite{Goldreich04}. Informally, a protocol is secure according to the real/ideal paradigm, if whatever an adversary can do in the real execution of protocol, can be done also in an ideal computation, in which an uncorrupted trusted party assists the computation. We consider \emph{full security}, meaning that the ideal-model adversary cannot prematurely abort the ideal computation.

\paragraph{Functionalities.}
\begin{definition}[functionalities]\label{def:func}
An $n$-party {\sf functionality} is a random process that maps vectors of $n$ inputs to vectors of $n$ outputs.\footnote{We assume that a functionality can be computed in polynomial time.} Given an $n$-party functionality $f \colon (\zs)^n \mapsto (\zs)^n$, let $f_i(\vx)$ denote its $i$'th output coordinate, \ie $f_i(\vx) = f(\vx)_i$. A functionality $f$ is {\sf symmetric}, if the output values of all parties are the same, \ie for every $\vx\in(\zs)^n$, $f_1(\vx)=f_2(\vx)=\ldots=f_n(\vx)$.
\end{definition}

\paragraph{Real-model execution.}
A real-model execution of an $n$-party protocol proceeds as described in \cref{sec:Protocols}.

\begin{definition} [real-model execution]\label{def:RealModel}
Let $\pi= (\Party_1,\ldots, \Party_n)$ be an $n$-party protocol and let $\CorSet \subseteq [n]$ denote the set of indices of the parties corrupted by $\Adv$. The {\sf joint execution of $\pi$ under $(\Adv,\CorSet)$ in the real model}, on input vector $\vx= (x_1,\ldots, x_n)$, auxiliary input $\aux$ and security parameter $\secParam$, denoted $\REAL_{\pi,\CorSet,\Adv(\aux)}(\vect{x},\secParam)$, is defined as the output vector of $\Party_1,\ldots,\Party_n$ and $\Adv(\aux)$ resulting from the protocol interaction, where for every $i \in \CorSet$, party $\Party_i$ computes its messages according to $\Adv$, and for every $j \notin \CorSet$, party $\Party_j$ computes its messages according to $\pi$.
\end{definition}

\paragraph{Ideal-model execution.}
An ideal computation of an $n$-party functionality $f$ on input $\vx=(x_1,\ldots,x_n)$ for parties $(\Party_1,\ldots,\Party_n)$ in the presence of an ideal-model adversary $\Adv$ controlling the parties indexed by $\CorSet\subseteq[n]$, proceeds via the following steps.
\begin{itemize}
  \item[Sending inputs to trusted party:]
  An honest party $\Party_i$ sends its input $x_i$ to the trusted party.
  The adversary may send to the trusted party arbitrary inputs for the corrupted parties. Let $x_i'$ be the value actually sent as the input of party $\Party_i$.

  \item[Trusted party answers the parties:]
  If $x_i'$ is outside of the domain for $\Party_i$, for some index $i$, or if no input was sent for $\Party_i$, then the trusted party sets $x_i'$ to be some predetermined default value.
  Next, the trusted party computes $f(x_1', \ldots, x_n')=(y_1, \ldots, y_n)$ and sends $y_i$ to party $\Party_i$ for every $i$.
  
  \item[Outputs:]
  Honest parties always output the message received from the trusted party and the corrupted parties output nothing.
  The adversary $\Adv$ outputs an arbitrary function of the initial inputs $\set{x_i}_{i\in\CorSet}$, the messages received by the corrupted parties from the trusted party $\set{y_i}_{i\in\CorSet}$ and its auxiliary input.
\end{itemize}

\begin{definition}[ideal-model computation]\label{def:IdealModel}
Let $f\colon(\zs)^n \mapsto (\zs)^n$ be an $n$-party functionality and let $\CorSet\subseteq [n]$. The {\sf joint execution of $f$ under $(\Adv, I)$ in the ideal model}, on input vector $\vect{x}=(x_1, \ldots, x_n)$, auxiliary input $\aux$ to $\Adv$ and security parameter $\secParam$, denoted $\IDEAL_{f,\CorSet,\Adv(\aux)}(\vect{x},\secParam)$, is defined as the output vector of $\Party_1, \ldots, \Party_n$ and $\Adv(\aux)$ resulting from the above described ideal process.
\end{definition}

\paragraph{Security definition.}

Having defined the real and ideal models, we can now define security of protocols according to the real/ideal paradigm.
\begin{definition}\label{def:SecureProtocol}
Let $f\colon(\zs)^n \mapsto (\zs)^n$ be an $n$-party functionality, and let $\pi$ be a probabilistic polynomial-time protocol computing $f$. The {\sf protocol $\pi$ $t$-securely computes $f$ (with computational security)}, if for every non-uniform polynomial-time real-model adversary \Adv, there exists a non-uniform (expected) polynomial-time adversary $\Sim$ for the ideal model, such that for every $\CorSet\subseteq [n]$ of size at most $t$, it holds that
$$
\set{\bigbrack \REAL_{\pi, \CorSet, \Adv(\aux)}(\vect{x}, \secParam)}_{(\vect{x}, \aux)\in(\zs)^{n+1}, \secParam\in\N}
\ci
\set{\bigbrack \IDEAL_{f, \CorSet, \Sim(\aux)}(\vect{x}, \secParam)}_{(\vect{x}, \aux)\in(\zs)^{n+1}, \secParam\in\N}.
$$
The {\sf protocol $\pi$ $t$-securely computes $f$ (with information-theoretic security)}, if for every real-model adversary \Adv, there exists an adversary $\Sim$ for the ideal model, whose running time is polynomial in the running time of $\Adv$, such that for every $\CorSet\subseteq [n]$ of size at most $t$,
$$
\set{\bigbrack \REAL_{\pi, \CorSet, \Adv(\aux)}(\vect{x}, \secParam)}_{(\vect{x}, \aux)\in(\zs)^{n+1}, \secParam\in\N}
\statclose
\set{\bigbrack \IDEAL_{f, \CorSet, \Sim(\aux)}(\vect{x}, \secParam)}_{(\vect{x}, \aux)\in(\zs)^{n+1}, \secParam\in\N}.
$$
\end{definition}

\subsubsection{Dominated Functionalities}\label{sec:RealIdeal:dominated}
A special class of symmetric functionalities are those with the property that every subset of a certain size can fully determine the output. For example, the multiparty Boolean AND and OR functionalities both have the property that every individual party can determine the output (for the AND functionality any party can always force the output to be 0, and for the OR functionality any party can always force the output to be 1). We distinguish between the case where there exists a single value for which every large enough subset can force the output and the case where different subsets can force the output to be different values.

\begin{definition}[dominated functionalities]\label{def:dominated}
A symmetric $n$-party functionality $f$ is {\sf weakly $k$-dominated}, if for every $k$-size subset $\CorSet\subseteq[n]$ there exists a polynomial-time computable value $\ys_\CorSet$, for which there exist inputs $\set{x_i}_{i\in\CorSet}$, such that $f(x_1,\ldots,x_n) = \ys_\CorSet$ for {\sf any} complementing subset of inputs $\set{x_j}_{j\notin\CorSet}$.
The functionality $f$ is {\sf $k$-dominated}, if there exists a polynomial-time computable value $\ys$ such that for every $k$-size subset $\CorSet\subseteq[n]$ there exist inputs $\set{x_i}_{i\in\CorSet}$, for which $f(x_1,\ldots,x_n) = \ys$ for {\sf any} subset of inputs $\set{x_j}_{j\notin\CorSet}$.
\end{definition}

\begin{example}
The function $f(x_1,x_2,x_3,x_4)=(x_1\wedge x_2) \vee (x_3\wedge x_4)$ is an example of a $4$-party function that is weakly $2$-dominated but not $2$-dominated. Every pair of input variables can be set to determine the output value. However, there is no single output value that can be determined by all pairs, for example, $\set{x_1,x_2}$ can force the output to be $1$ (by setting $x_1=x_2=1$) whereas $\set{x_1,x_3}$ can force the output to be $0$ (by setting $x_1=x_3=0$). The function
$$
f_{2\mhyphen{\rm of}\mhyphen 4}(x_1,x_2,x_3,x_4)=(x_1\wedge x_2) \vee (x_1\wedge x_3) \vee (x_1\wedge x_4) \vee (x_2\wedge x_3) \vee (x_2\wedge x_4) \vee (x_3\wedge x_4)
$$
is $2$-dominated with value $\ys=1$.
\end{example}

\begin{claim}\label{claim:dominated}
Let $f$ be an $n$-party functionality and let $m\leq\ThirdP$. If $f$ is weakly $m$-dominated, then it is $m$-dominated.
\end{claim}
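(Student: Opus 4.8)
The plan is to show that the subset-specific values $\ys_\CorSet$ guaranteed by weak $m$-domination all coincide, so that a single value $\ys$ works for every $m$-size subset simultaneously; this is exactly what is needed to upgrade weak $m$-domination to $m$-domination, since the forcing inputs $\set{x_i}_{i\in\CorSet}$ are already supplied by the weak notion.

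First I would handle \emph{disjoint} subsets. Let $\CorSet_1,\CorSet_2\subseteq[n]$ be disjoint, each of size $m$ (these exist since $2m\le n$). By weak $m$-domination there are inputs $\set{x_i}_{i\in\CorSet_1}$ that force $f$'s output to $\ys_{\CorSet_1}$ against every complement, and inputs $\set{x_j}_{j\in\CorSet_2}$ that force the output to $\ys_{\CorSet_2}$ against every complement. Take any input vector $\vx$ whose $\CorSet_1$-coordinates and $\CorSet_2$-coordinates are set to these forcing values (no conflict, since the two sets are disjoint) and whose remaining coordinates are arbitrary. Applying the forcing property of $\CorSet_1$ gives $f(\vx)=\ys_{\CorSet_1}$, while applying that of $\CorSet_2$ gives $f(\vx)=\ys_{\CorSet_2}$; hence $\ys_{\CorSet_1}=\ys_{\CorSet_2}$.

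Then I would remove disjointness via a bridging set. Given arbitrary $m$-size subsets $\CorSet,\CorSet'$, we have $\size{\CorSet\cup\CorSet'}\le 2m$, and since $m\le\ThirdP$ this is at most $n-m$, so there is an $m$-size subset $\CorSet''\subseteq[n]\setminus(\CorSet\cup\CorSet')$. As $\CorSet''$ is disjoint from each of $\CorSet$ and $\CorSet'$, the disjoint case gives $\ys_\CorSet=\ys_{\CorSet''}=\ys_{\CorSet'}$. Thus $\ys_\CorSet$ is independent of the $m$-size subset $\CorSet$; denote the common value $\ys$. It is polynomial-time computable because, say, $\ys_{\CorSet_0}$ is, for a fixed $m$-size $\CorSet_0$. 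Combined with the forcing inputs from weak $m$-domination, this shows $f$ is $m$-dominated.

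The argument is a direct unwinding of the definitions and presents no real obstacle; the only substantive point is recognizing \emph{why} the hypothesis is $m\le\ThirdP$ rather than merely $m<\HalfP$: the transitivity step requires a set $\CorSet''$ disjoint from the \emph{union} of two $m$-size sets, which needs $3m\le n$. (For $m=0$ the statement is vacuous, the empty set being the only $0$-size subset, and the proof specializes harmlessly.)
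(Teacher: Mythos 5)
Your proof is correct and follows essentially the same route as the paper's: first equate the forcing values for disjoint $m$-subsets by choosing the two forcing inputs simultaneously, then bridge arbitrary $m$-subsets through a third $m$-subset disjoint from their union, which exists because $3m\le n$. The added remarks on polynomial-time computability and the role of the $m\le n/3$ hypothesis are accurate but do not change the substance of the argument.
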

\begin{proof}
Let $\CorSet_1,\CorSet_2\subseteq[n]$ be two subsets of size $m$. In case $\CorSet_1$ and $\CorSet_2$ are disjoint, consider the corresponding sets of input variables $\set{x_i}_{i\in\CorSet_1}$ and $\set{x_i}_{i\in\CorSet_2}$, and fix an arbitrary complementing subset of inputs $\set{x_j}_{j\notin\CorSet_1\cup\CorSet_2}$. On the one hand it holds that $f(x_1,\ldots,x_n)=\ys_{\CorSet_1}$ and on the other hand it holds that $f(x_1,\ldots,x_n)=\ys_{\CorSet_2}$, hence $\ys_{\CorSet_1}=\ys_{\CorSet_2}$.

In case $\CorSet_1$ and $\CorSet_2$ are not disjoint, it holds that $\size{\CorSet_1\cup\CorSet_2}<2m\leq\frac{2n}3$ and since $m\leq\ThirdP$, there exists a subset $\CorSet_3\subseteq [n]\setminus (\CorSet_1\cup\CorSet_2)$ of size $m$. Denote by $\ys_{\CorSet_3}$ the output value that can be determined by the input variables $\set{x_i}_{i\in\CorSet_3}$ ($\ys_{\CorSet_3}$ is guaranteed to exist since $f$ is weakly $m$-dominated). $\CorSet_3$ is disjoint from $\CorSet_1$ and from $\CorSet_2$, so it follows that $\ys_{\CorSet_1}=\ys_{\CorSet_3}$ and $\ys_{\CorSet_2}=\ys_{\CorSet_3}$, therefore $\ys_{\CorSet_1}=\ys_{\CorSet_2}$.
\end{proof}

\subsubsection{The Lower Bound}\label{sec:RealIdeal:LB}

\begin{lemma}\label{lem:MPCImp}
Let $n\geq 3$, let $t\geq \ThirdP$ and let $f$ be a symmetric $n$-party functionality that can be $t$-securely computed in the secure-channels \ptp model.
\begin{enumerate}
  \item
  If $\ThirdP\leq t<\HalfP$, then $f$ is $(n-2t)$-dominated.
  \item
  If $\HalfP\leq t<n$, then $f$ is $1$-dominated.
\end{enumerate}
\end{lemma}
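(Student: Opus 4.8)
The plan is to combine the attack of \cref{lem:AdvStrictPoly} with the simulation-based security of $\pi$. So let $\pi$ be a protocol that $t$-securely computes the symmetric functionality $f$ in the secure-channels \ptp model. The first step is to observe that $\pi$ is $(1-\negl,t)$-consistent against \ppt adversaries: in the ideal computation all honest parties receive the same value $f(\vx')$ (since $f$ is symmetric and the ideal adversary cannot abort), so the event ``all honest parties output the same value'' occurs there with probability $1$; as this event is efficiently decidable from the output vector and the (fixed) set $\CorSet$, \cref{def:SecureProtocol} forces it to occur with probability $1-\negl$ in the real execution of $\pi$ against every \ppt adversary. Now invoke \cref{lem:AdvStrictPoly} with $\delta=\negl$ (using that a protocol computing $f$ is strictly polynomial-time, hence $T$-time and, \wlg, $q$-round for polynomials $T,q$) and with $s=n-2t$ in case~1 and $s=1$ in case~2 — exactly the split made in the lemma. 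This yields a \ppt adversary $\Adv$ that, controlling any $s$-subset $\CorSet$, first outputs a value $\ys_\CorSet$ — a random variable depending only on $\CorSet$ and $\Adv$'s coins, and in particular \emph{independent of the honest parties' inputs} — and then, interacting with the remaining parties of $\pi$ on any inputs of the given length, makes all of them output $\ys_\CorSet$ except with negligible probability. We take the final output of $\Adv$ to be this value $\ys_\CorSet$.

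Fix an $s$-subset $\CorSet$ and let $\Sim$ be the ideal-model simulator guaranteed for $\Adv$ by the security of $\pi$. Run the ideal computation with the corrupted parties' inputs fixed (say, to $\vec0$), auxiliary input empty, and the honest parties' inputs equal to an arbitrary vector $\vw_H$. The simulator $\Sim$ first sends to the trusted party some inputs $\xi$ for the parties in $\CorSet$ — crucially, $\Sim$ must commit to the \emph{distribution} of $\xi$ before it learns anything about $\vw_H$ — after which every honest party outputs $f(\xi,\vw_H)$ while $\Sim$ outputs some value $\hat\ys$. Comparing with the real execution under $\Adv$, where all honest parties and $\Adv$ output the common value $\ys_\CorSet$ whose distribution $D_\CorSet$ does not depend on $\vw_H$, indistinguishability yields two facts that hold simultaneously for \emph{every} $\vw_H$: the common value $f(\xi,\vw_H)$ of the honest parties is negligibly close in distribution to $D_\CorSet$, and $\hat\ys=f(\xi,\vw_H)$ except with negligible probability. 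The goal is then to conclude from this that there is a fixed input assignment $x^\ast_\CorSet$ for the parties in $\CorSet$ and a fixed value $v_\CorSet$ — both efficiently computable from $1^\secParam$ and $\CorSet$ — such that $f(x^\ast_\CorSet,\vw_H)=v_\CorSet$ for every honest input vector $\vw_H$; that is, $f$ is \emph{weakly $s$-dominated}. The reason one expects this is that $\Sim$'s choice of $\xi$ is made with no knowledge of $\vw_H$, so if no single assignment dominated $f$ with respect to $\CorSet$ then $f(\xi,\vw_H)$ would genuinely vary with $\vw_H$ in a way $\Sim$ cannot anticipate, contradicting that its pre-committed choice reproduces the $\vw_H$-independent distribution $D_\CorSet$ for all $\vw_H$ at once.

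The main obstacle is exactly this last step: turning ``$\Sim$'s pre-committed corrupted inputs make the output distributionally $\vw_H$-independent'' into ``a single fixed assignment dominates $f$ to a single fixed value''. This is where one must be careful — ruling out that the dominating behavior only holds on average over $\Sim$'s coins, or only up to a negligible fraction of the honest inputs, and making the value $v_\CorSet$ and the assignment $x^\ast_\CorSet$ genuinely fixed and efficiently computable; the tools for this are that the simulation error is negligible for \emph{every} input and \emph{every} input length, together with the strict bound on the round complexity of $\pi$. Once weak $s$-domination is established for every $s$-subset $\CorSet$, the proof concludes by noting that $n\ge3$ and $t\ge n/3$ give $s=n-2t\le n/3$ in case~1 and $s=1\le n/3$ in case~2, so \cref{claim:dominated} upgrades ``weakly $s$-dominated'' to ``$s$-dominated'': $f$ is $(n-2t)$-dominated in case~1 and $1$-dominated in case~2, as claimed. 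The same argument applies verbatim to expected polynomial-time protocols and simulators, using \cref{lem:AdvExpectedPoly} in place of \cref{lem:AdvStrictPoly}.
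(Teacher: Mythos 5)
Your proposal follows the paper's proof almost exactly: establish $(1-\negl,t)$-consistency from the symmetry of $f$ and full security, invoke \cref{lem:AdvStrictPoly} to get a \ppt adversary that pre-commits to a value $\ys_\CorSet$ and forces all honest outputs to $\ys_\CorSet$, pass to the ideal-world simulator (noting $s\le t$), argue that $\Sim$'s pre-committed input choices must yield a dominating assignment, hence $f$ is weakly $s$-dominated, and finally upgrade to $s$-domination via \cref{claim:dominated} using $s\le n/3$. That is the paper's argument, case split and all.

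The difference is at the step you yourself label ``the main obstacle.'' The paper dispatches it with a single sentence: ``All $\Sim$ can do is to select the input values of the corrupted parties, hence, there must exist input values $\set{x_i}_{i\in\CorSet}$ that determine the output of the honest parties to be $\ys_\CorSet$.'' You are right that this step is not a tautology: $\Sim$ is randomized, $\ys_\CorSet$ is a random variable over $\Adv$'s coins, and what indistinguishability gives you directly is that \emph{for each fixed} $\vw_H$, $\Pr_\Sim[f(\xi,\vw_H)=\ys_\CorSet]\ge 1-\negl(\secParam)$. Passing from this $\vw_H$-pointwise statement to the existence of a \emph{single} assignment $\xi$ that works for \emph{all} $\vw_H$ simultaneously requires an additional argument (a union bound over the exponentially many $\vw_H$ does not go through against a merely negligible error), and your write-up stops short of providing it — you assert that ``the tools for this are'' the per-input negligible simulation error ``together with the strict bound on the round complexity of $\pi$,'' but you do not carry it out. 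The appeal to round complexity is in any case a red herring: once you are reasoning purely in the ideal model about what $\Sim$ can accomplish by choosing inputs, the round complexity of $\pi$ has already been used up inside \cref{lem:AdvStrictPoly} and plays no further role. So while you have correctly identified the one nontrivial step — something the paper treats quite informally — your proposal leaves it open rather than closing it; to complete the proof you would either need to actually carry out an averaging/fixing argument (e.g., by derandomizing $\Adv$ so that $\ys_\CorSet$ is a fixed value, and then arguing about the support of $\Sim$'s input distribution), or explicitly state that you are adopting the paper's informal justification at this point.
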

\begin{proof}
Assume that $\ThirdP\leq t<\HalfP$ (the proof for $\HalfP\leq t<n$ is similar). Let $\pi$ be a protocol that $t$-securely computes $f$ in the \ptp model with secure channels. Since $f$ is symmetric, all honest parties output the same value (except for a negligible probability), hence $\pi$ is $(1-\negl,t)$-consistent; let $\Adv$ be the \ppt adversary guaranteed from \cref{lem:AdvStrictPoly} and let $\CorSet\subseteq [n]$ be any subset of size $n-2t$. It follows that given control over $\set{\Party_i}_{i\in\CorSet}$, \Adv can first fix a value $\ys_\CorSet$, and later force the output of the honest parties to be $\ys_\CorSet$ (except for a negligible probability). Since $\pi$ $t$-securely computes $f$ and $n-2t\leq t$, there exists an ideal-model adversary $\Sim$ that upon corrupting $\set{\Party_i}_{i\in\CorSet}$, can force the output of the honest parties in the ideal-model computation to be $\ys_\CorSet$. All $\Sim$ can do is to select the input values of the corrupted parties, hence, there must exist input values $\set{x_i}_{i\in\CorSet}$ that determine the output of the honest parties to be $\ys_\CorSet$, \ie $f$ is weakly $(n-2t)$-dominated. Since $n-2t\leq \ThirdP$ and following \cref{claim:dominated} we conclude that $f$ is $(n-2t)$-dominated.
\end{proof}

\subsection{Coin-Flipping Protocols}\label{sec:CF}
A coin-flipping protocol \cite{Blum81} allows the honest parties to jointly flip an unbiased coin, where even a coalition of cheating (efficient) parties cannot bias the outcome of the protocol by much. Our focus is on coin flipping, where the honest parties \emph{must} output a bit. Although \cref{lem:MPCImp} immediately shows that coin flipping cannot be securely computed according to the real/ideal paradigm, we present a stronger impossibility result by considering weaker security requirements.

\begin{definition}\label{def:CF}
A polynomial-time $n$-party protocol $\pi$ is a {\sf $(\gamma,t)$-bias} coin-flipping protocol, if the following holds.
\begin{enumerate}
  \item
  $\pi$ is $(1,t)$-consistent against \ppt adversaries.\footnote{Our negative result readily extends to protocols where consistency is only guaranteed to hold with high probability.}
  \item
  When interacting on security parameter $\secParam$ (for sufficiently large $\secParam$'s) with a \ppt adversary controlling at most $t$ corrupted parties, the common output of the honest parties is $\gamma(\secParam)$-close to the being a uniform bit.\footnote{In particular, the honest parties are allowed to output $\bot$, or values other than $\zo$, with probability at most $\gamma$.}
\end{enumerate}
\end{definition}

The following is a straightforward application of \cref{lem:AdvStrictPoly}.
\begin{lemma}\label{lem:ImpForCF}
In the secure-channels \ptp model, for $n\geq 3$ and $\gamma(\secParam) < \frac12 - 2^{-\secParam}$, there exists no $n$-party, $(\gamma,\ceil{\ThirdP})$-bias coin-flipping protocol.
\end{lemma}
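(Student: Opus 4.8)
The plan is to instantiate \cref{lem:AdvStrictPoly} with consistency parameter $\delta\equiv 0$, which upgrades the attack's ``force every honest party to output the predetermined value $\ys$'' into \emph{exact} control of the common output, and then to amplify this control into a large bias by re-running the output-fixing stage of the attack polynomially many times.

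Concretely, suppose toward a contradiction that $\pi$ is an $n$-party $(\gamma,t)$-bias coin-flipping protocol with $t=\ceil{\ThirdP}$ and $\gamma(\secParam)<\frac12-2^{-\secParam}$. By \cref{def:CF}, $\pi$ is polynomial-time and $(1,t)$-consistent against \ppt adversaries, hence $(1-\delta,t)$-consistent with $\delta\equiv 0$; let $q,T\in\poly$ bound its round and time complexities. Since $t\ge\ThirdP$, \cref{lem:AdvStrictPoly} applies with $s=n-2t$ if $t<\HalfP$ and $s=1$ otherwise (note $s\le t$), so, fixing an arbitrary $s$-size $\CorSet\subseteq[n]$, we obtain a \ppt adversary $\cA$ that first outputs a value $\ys=\ys(\CorSet)$ and then, interacting with the remaining honest parties of $\pi$, makes \emph{every} honest party output $\ys$; as $\delta\equiv 0$ the failure bound $\left(\frac32 q+1\right)\cdot\delta$ is $0$, so this holds with probability $1$.

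Next I would regard $\ys$ as a random variable over the coins used by $\cA$ in its (first) output-fixing stage, and note that one of the three mutually exclusive and exhaustive events $\{\ys=0\}$, $\{\ys=1\}$, $\{\ys\notin\zo\}$ has probability at least $\tfrac13$. For each target $v\in\{0,1,\star\}$ (with $\star$ abbreviating the event $\ys\notin\zo$), I would define a \ppt adversary $\cA_v$ that re-runs the output-fixing stage of $\cA$, with fresh coins, up to $2\secParam$ times, stopping as soon as the produced value matches $v$, and then runs the interaction stage of $\cA$ using that successful run (aborting arbitrarily if all $2\secParam$ runs fail). Each $\cA_v$ runs in time $O(\secParam\cdot nqT)=\poly(\secParam)$. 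For the target $v$ whose event has probability $\ge\tfrac13$, all $2\secParam$ runs fail with probability at most $(2/3)^{2\secParam}=(4/9)^\secParam\le 2^{-\secParam}$, and conditioned on a successful run the guarantee of \cref{lem:AdvStrictPoly} forces the honest parties' common output $X$ to match $v$ with probability $1$; hence under $\cA_v$ the output $X$ equals $v$ (resp.\ lies outside $\zo$, when $v=\star$) with probability at least $1-2^{-\secParam}$.

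Finally I would conclude with a one-line estimate: the statistical distance of $X$ from a uniformly random bit is at least $\Pr[X=v]-\tfrac12\ge\tfrac12-2^{-\secParam}$ when $v\in\zo$, and at least $\Pr[X\notin\zo]\ge 1-2^{-\secParam}$ when $v=\star$; in both cases it exceeds $\gamma(\secParam)$ for all sufficiently large $\secParam$, contradicting the \cref{def:CF} requirement that the honest parties' output be $\gamma$-close to a uniform bit against the \ppt adversary $\cA_v$. I expect the one delicate point to be checking that re-running $\cA$'s output-fixing stage while reusing a single interaction stage is sound --- i.e.\ that the conclusion of \cref{lem:AdvStrictPoly}, that the far-away party $\Party^\ast$ (and hence every honest party) outputs exactly the value fixed in the first stage, is insensitive to the interaction stage substituting the real honest parties for two of the emulated ring parties --- but this is precisely the robustness already established in the proof of \cref{lem:AdvStrictPoly_threeparty}, so no new idea is needed. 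The argument extends verbatim to expected-polynomial-time protocols via \cref{lem:AdvExpectedPoly}, whose extra $2^{-z}$ slack is absorbed by taking $z=\secParam$.
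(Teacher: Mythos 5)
Your proposal is correct and follows essentially the same route as the paper's proof: instantiate \cref{lem:AdvStrictPoly} with $\delta\equiv 0$, then amplify by re-running the output-fixing stage until it produces a chosen target value, and read off a statistical distance of at least $\frac12-2^{-\secParam}$. Two minor notes. Since the target $v\in\set{0,1,\star}$ with probability $\ge\tfrac13$ may depend on $\secParam$, the final quantifier needs a small adjustment: for every threshold $N$, at least one of the three \emph{fixed} adversaries $\cA_0,\cA_1,\cA_\star$ produces a bias exceeding $\gamma(\secParam)$ at some $\secParam>N$, which already contradicts \cref{def:CF}; the paper sidesteps this by making a two-way WLOG split ($\pr{\ys(\secParam)=0}\le\tfrac12$ for infinitely many $\secParam$) and targeting any non-zero value with a single adversary. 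Also, the ``delicate point'' you flag --- restarting the output-fixing stage while running the interaction once from the chosen first-stage state --- is already sound by the lemma's own two-stage phrasing: with $\delta\equiv 0$ the interaction stage forces every honest output to $\ys$ for \emph{every} fixing of the first-stage coins, so no appeal to the internals of \cref{lem:AdvStrictPoly_threeparty} is needed.
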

\begin{proof}
Let $\pi$ be a \ptp $n$-party $(\gamma,\ceil{\ThirdP})$-bias coin-flipping protocol. Let \Adv be the \ppt adversary that is guaranteed by \cref{lem:AdvStrictPoly} (since $\pi$ is $(1,\ceil{\ThirdP})$-consistent against \ppt adversaries). Consider some fixed set of $\ceil{\ThirdP}$ corrupted parties of $\pi$ and let $Y(\secParam)$ denote the random variable of $\Adv(\secParam)$'s output in the first step of the attack. Without loss of generality, for infinitely many values of $\secParam$ it holds that $\pr{Y(\secParam) = 0} \le \frac12$. Consider the adversary $\Adv'$ that on security parameter $\secParam$, repeats the first step of $\Adv(\secParam)$ until the resulting value of $\ys$ is non-zero or $\secParam$ failed attempts have been reached, where if the latter happens $\Adv'$ aborts. Next, $\Adv'$ continues the non-zero execution of \Adv to make the honest parties of $\pi$ output $\ys$. It is immediate that for infinitely many values of $\secParam$, the common output of the honest parties under the above attack is $0$ with probability at most $2^{-\secParam}$, and hence the common output of the honest parties is $\frac12 - 2^{-\secParam}$ far from uniform. Thus, $\pi$ is not a $(\gamma,\ceil{\ThirdP})$-bias coin-flipping protocol.
\end{proof}

\section{Characterizing Secure Computation Without Broadcast}\label{sec:Characterization}
In this section we show that the lower bounds presented in \cref{lem:MPCImp} is tight. We treat separately the case where an honest majority is assumed and the case where no honest majority is assumed.

\subsection{No Honest Majority}\label{sec:Characterization:NoHonestMajority}
\citet[Thm.\ 7]{CohenLindell14} showed that, assuming the existence of one-way functions, any $1$-dominated functionality that can be $t$-securely computed in the broadcast model with authenticated channels, can also be $t$-securely computed in the \ptp model with authenticated channels.\footnote{The result in \cite{CohenLindell14} is based on the computationally-secure protocol in \cite[Thm.\ 2]{FGHHS02}. In the authenticated-channels \ptp model, this protocol requires one-way functions for constructing a consistent public-key infrastructure between the parties, to be used for authenticated broadcast.} Combining with \cref{lem:MPCImp}, we establish the following result.

\begin{theorem}[restating second part of \cref{thm:mainCharacterizationInf}]\label{thm:Characterization:NoHonestMajority}
Let $n\geq 3$, let $\HalfP\leq t<n$ and assume that one-way functions exist. An $n$-party functionality can be $t$-securely computed in the authenticated-channels \ptp model, if and only if it is $1$-dominated and can be $t$-securely computed in the authenticated-channels broadcast model.
\end{theorem}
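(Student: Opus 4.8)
The plan is to prove \cref{thm:Characterization:NoHonestMajority} as a straightforward combination of the lower bound already established in this paper and the upper bound of \citet[Thm.\ 7]{CohenLindell14}. The statement is an ``if and only if'', so I would split the proof into the two implications and handle each separately.

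For the forward direction, suppose $f$ can be $t$-securely computed in the authenticated-channels \ptp model for some $\HalfP \le t < n$. First, since any protocol secure in the authenticated-channels model is also a (valid) description to which the secure-channels lower bound applies---secure channels only give the adversary less power, so a protocol that is $t$-secure with authenticated channels is in particular $(1-\negl,t)$-consistent against \ppt adversaries, which is all \cref{lem:MPCImp} needs---we may invoke the second item of \cref{lem:MPCImp} to conclude that $f$ is $1$-dominated. (Here I would be slightly careful: \cref{lem:MPCImp} as stated is about the secure-channels model; I would note that a protocol in the authenticated-channels model that $t$-securely computes $f$ yields, a fortiori, consistency in the sense of \cref{def:consistency}, and that is the only property \cref{lem:AdvStrictPoly} exploits, so the conclusion that $f$ is $1$-dominated carries over.) Second, since any \ptp-model protocol is trivially also a broadcast-model protocol (the parties simply ignore the broadcast channel), the same protocol witnesses that $f$ can be $t$-securely computed in the authenticated-channels broadcast model. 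This gives both conjuncts of the right-hand side.

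For the reverse direction, suppose $f$ is $1$-dominated and can be $t$-securely computed in the authenticated-channels broadcast model. Then I would simply cite \citet[Thm.\ 7]{CohenLindell14}, which---under the standing assumption that one-way functions exist---compiles any such broadcast-model protocol for a $1$-dominated functionality into a \ptp-model protocol in the authenticated-channels setting that $t$-securely computes $f$. This is exactly the hypothesis under which the theorem is stated, so no further work is needed.

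The only real subtlety---and the step I would treat most carefully---is the model-matching in the forward direction: \cref{lem:MPCImp} is phrased for the secure-channels \ptp model, whereas \cref{thm:Characterization:NoHonestMajority} speaks of authenticated channels. The point to make is that this is the ``favorable'' direction: giving the honest parties secure channels instead of merely authenticated ones can only help them, so \emph{a priori} it is not obvious the impossibility transfers. The resolution is that the attack underlying \cref{lem:AdvStrictPoly} is an attack by a \emph{malicious} adversary that only uses the messages it legitimately receives and sends in the protocol; privacy of the honest-to-honest links is irrelevant to it. Concretely, a protocol that $t$-securely computes a symmetric $f$ in the authenticated-channels model is in particular $(1-\negl,t)$-consistent against \ppt adversaries in that model, and the adversary of \cref{lem:AdvStrictPoly} (which corrupts a single party when $t \ge \HalfP$) operates identically regardless of whether the remaining honest-honest channels are private. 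Hence the conclusion ``$f$ is $1$-dominated'' holds, and combining the two directions completes the proof. \qed
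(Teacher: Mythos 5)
Your proof is correct and takes the same route as the paper, which proves this theorem in one line by combining \cref{lem:MPCImp} with \citet[Thm.\ 7]{CohenLindell14}. One small remark on your ``only real subtlety'': the transfer from \cref{lem:MPCImp} (stated for secure channels) to the authenticated-channels setting is even more immediate than your re-examination of the attack suggests---any protocol that $t$-securely computes $f$ over authenticated channels also $t$-securely computes $f$ over secure channels (the adversary is strictly weakened), so one can invoke \cref{lem:MPCImp} directly to deduce $1$-dominance without reopening the attack; this is exactly why the paper states the lower bound in the secure-channels model to begin with (see the footnote to \cref{cor:mainLInf}).
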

\begin{proof}
Immediately by \cref{lem:MPCImp} and \citet[Thm.\ 7]{CohenLindell14}.
\end{proof}

\subsection{Honest Majority}\label{sec:Characterization:HonestMajority}

\begin{proposition}\label{prop:upperBound:honsetMajority}
Let $n\geq 3$, let $\ThirdP\leq t<\HalfP$, and let $f$ be a symmetric $n$-party functionality. If $f$ is $(n-2t)$-dominated, then it can be $t$-securely computed in the secure-channels \ptp model with information-theoretic security.
\end{proposition}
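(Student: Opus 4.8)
\begin{proofidea}
The plan is to generalize the compiler of \citet{CohenLindell14} --- which handles the no-honest-majority, $1$-dominated case via the single-threshold detectable precomputation of \citet{FGHHS02} --- by replacing that precomputation with the \emph{two-threshold} detectable precomputation of \citet{FHHW03}. Let $\ys$ denote the polynomial-time computable value witnessing that $f$ is $(n-2t)$-dominated (\cref{def:dominated}). The protocol $\pi$ for $f$ has two stages. In the \emph{setup stage}, the parties run the two-threshold detectable precomputation of \cite{FHHW03} with lower threshold $t_1 = \ceil{\ThirdP}-1$ and upper threshold $t_2 = t$; since $t_1 < \ThirdP \le t < \HalfP$ these lie in the feasible regime, and the resulting protocol is information-theoretically secure in the secure-channels \ptp model and guarantees that all honest parties reach the same verdict: either \texttt{success}, in which case they additionally hold correlated randomness that lets them emulate a broadcast channel secure against up to $t_2 = t$ corruptions, or \texttt{failure}, which can occur only if strictly more than $t_1$ parties are corrupted. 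In the \emph{computation stage}: on verdict \texttt{failure}, every honest party outputs $\ys$ and halts; on verdict \texttt{success}, the parties use the emulated broadcast channel to run the information-theoretically secure honest-majority protocol of \citet{RB89} for $f$ (applicable since $t < \HalfP$) and output its result. Both stages run in polynomial time.

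To establish security in the sense of \cref{def:SecureProtocol}, I would construct, for each real-model adversary $\Adv$, an ideal-model simulator $\Sim$. As the setup stage is independent of the parties' inputs to $f$, $\Sim$ first runs $\Adv$ against a simulated setup stage using the simulator guaranteed by the security of \cite{FHHW03}, and records the common verdict. If the verdict is \texttt{success}, $\Sim$ continues with the simulator of the \cite{RB89} protocol: facing at most $t < \HalfP$ corruptions and a perfectly emulated broadcast channel, it extracts effective inputs for the corrupted parties, forwards them to the trusted party, and completes the simulation. If the verdict is \texttt{failure}, then the corrupted set $\CorSet$ has size $\size{\CorSet} \ge t_1 + 1 = \ceil{\ThirdP} \ge n - 2t$ (the last inequality because $t \ge \ceil{\ThirdP}$ forces $n - 2t \le n - 2\ceil{\ThirdP} \le \ceil{\ThirdP}$, and note $n - 2t \ge 1$ as $t < \HalfP$); $\Sim$ then picks an arbitrary $(n-2t)$-size subset $\CorSet' \subseteq \CorSet$, sends to the trusted party the inputs $\set{x_i}_{i \in \CorSet'}$ that force $f$'s output to $\ys$ (which exist by $(n-2t)$-domination, and which $\Sim$ may receive as non-uniform advice) together with arbitrary inputs for the remaining corrupted parties, obtains $\ys$, and hands $\Adv$ the simulated setup view together with output $\ys$.

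Indistinguishability is argued per verdict. On \texttt{failure}, the honest parties output $\ys$ in both the real and ideal worlds, and $\Adv$'s view of the setup stage is statistically close in the two worlds by the security of \cite{FHHW03}, so the joint output distributions are statistically close. On \texttt{success}, statistical closeness follows by sequential composition: the setup stage securely realizes the broadcast-setup functionality, and, conditioned on \texttt{success}, the computation stage securely realizes $f$ given that setup, by the security of \cite{RB89} against at most $t < \HalfP$ corruptions with a broadcast channel. Since $\Adv$ cannot make honest parties reach conflicting verdicts, the verdict is (statistically) identically distributed in both worlds and $\Sim$ observes it, so combining the two conditional statements gives $\REAL_{\pi,\CorSet,\Adv(\aux)} \statclose \IDEAL_{f,\CorSet,\Sim(\aux)}$ for every $\CorSet$ with $\size{\CorSet} \le t$, which is the conclusion of the proposition.

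The step I expect to be the main obstacle is the setup stage: pinning down the precise guarantees and feasibility region of the two-threshold detectable precomputation of \cite{FHHW03}, and verifying that a \texttt{success} verdict indeed delivers sufficient correlated randomness to emulate a broadcast channel secure against $t < \HalfP$ corruptions --- information-theoretically and with no prior setup --- so that it composes cleanly with \cite{RB89}. A secondary point, standard but essential (and the reason the $1$-domination argument of \cite{CohenLindell14} does not transfer verbatim), is the observation above that a \texttt{failure} verdict forces the adversary to control at least $n - 2t$ parties, which is exactly where the $(n-2t)$-domination hypothesis is used.
\end{proofidea}
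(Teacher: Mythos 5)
Your approach is genuinely different from the paper's, and it is worth contrasting the two. The paper invokes FHHW03's two-threshold \emph{MPC} theorem (\cref{thm:6-FHHW03}) directly for $f$ itself: under the condition $t_1\le t_2$ and $t_1+2t_2<n$, that theorem gives a protocol $\pi'$ in the secure-channels \ptp model that computes $f$ with full security if $|\CorSet|\le t_1$ and with fairness (output possibly a consistent $\bot$, but no information leaks) if $t_1<|\CorSet|\le t_2$. The paper sets $t_1=n-2t-1$, $t_2=t$, which saturates the constraint ($t_1+2t_2=n-1<n$) and guarantees that a $\bot$ output forces $|\CorSet|\ge n-2t$. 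The outer protocol $\pi$ simply replaces $\bot$ by $\ys$, and the simulator wraps FHHW03's simulator, translating its \abort into sending the dominating inputs. (The paper peels off the case $n-2t=1$ separately, reducing it to \citet{CohenLindell14} over \citet{RB89}.) You instead propose the layered decomposition --- detectable precomputation to emulate broadcast, then \citet{RB89} on top --- which is the structure of \citet{CohenLindell14}'s compiler; the paper avoids having to reason about that composition by letting FHHW03 compute $f$ directly.

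Two points you should check carefully before this goes through. First, your threshold choice $t_1=\ceil{\ThirdP}-1$, $t_2=t$ does \emph{not} satisfy the condition $t_1+2t_2<n$ stated in \cref{thm:6-FHHW03}: for example $n=7$, $t=3$ gives $t_1+2t_2=2+6=8>7$. So you are not appealing to the same result from \cite{FHHW03} that the paper uses; you are implicitly relying on a detectable-precomputation-for-broadcast primitive from \cite{FHHW03} with a different, more permissive, feasibility region, and you must verify that such a result is actually stated there with the robustness/consistency thresholds you need --- as you yourself flag as ``the main obstacle.'' If you wanted to stay inside \cref{thm:6-FHHW03}, you would have to lower $t_1$ to $n-2t-1$, which is exactly the paper's choice. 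Second, the part of your argument that is unambiguously right, and is also the crux of the paper's proof, is the observation that a failure/abort verdict forces $|\CorSet|\ge n-2t$, so that $(n-2t)$-domination lets the simulator choose corrupted inputs forcing output $\ys$; this is precisely where the hypothesis is consumed in both proofs, and your reasoning there is correct.
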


To prove \cref{prop:upperBound:honsetMajority} we use the \emph{two-threshold multiparty protocol} of \citet[Thm.\ 6]{FHHW03}. This protocol with parameters $t_1, t_2$ runs in the \ptp model with secure channels, and whenever $t_1\leq t_2$ and $t_1+2t_2<n$, the following holds. Let $\CorSet$ be the set of parties that the (computationally unbounded) adversary corrupts. If $\size{\CorSet}\le t_1$, then the protocol computes $f$ with full security. If $t_1< \size{\CorSet}\le t_2$, then the protocol securely computes $f$ with fairness (i.e., the adversary may force \emph{all} honest parties to output $\bot$, provided that it learns no new information).
In \cref{sec:thresholdSecurity}, we formally define the notion of two-threshold security.
This notion captures the security achieved by the protocol of \citet[Thm.\ 6]{FHHW03}.
\begin{theorem}[{\cite[Thm.\ 6]{FHHW03}}]\label{thm:6-FHHW03}
Let $n\geq 3$, let $t_1, t_2$ be parameters such that $t_1\leq t_2$ and $t_1+2t_2<n$, and let $f$ be an $n$-party functionality. Then, $f$ can be $(t_1,t_2)$-securely computed in the secure-channels \ptp model with information-theoretic security.
\end{theorem}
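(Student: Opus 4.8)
The plan is to prove \cref{thm:6-FHHW03} by the standard two-phase ``detectable precomputation, then honest-majority MPC'' paradigm, exploiting the fact that the setup phase is independent of the parties' inputs in order to absorb all aborts into it. First I would record the arithmetic consequences of the hypotheses: from $t_1 \le t_2$ we get $3t_1 \le t_1 + 2t_2 < n$, so $t_1 < \ThirdP$ (the regime in which broadcast is realizable from scratch), and trivially $2t_2 \le t_1 + 2t_2 < n$, so $t_2 < \HalfP$ (the regime in which fully secure information-theoretic MPC is possible \emph{given} a broadcast channel). The protocol should therefore behave as though it had both a broadcast channel and an honest majority, \emph{provided} a short, input-independent setup phase does not fail; and that setup phase can be forced to fail only by an adversary who deviates, in which case all honest parties detect this simultaneously and abort before any input is used.

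\textbf{Phase 1 (detectable precomputation).} I would invoke a \emph{detectable precomputation} subprotocol $\pi_{\mathrm{pre}}$ run over the secure \ptp channels and taking no private inputs, with the following guarantees: (i) every honest party halts with output either $\bot$ or a setup string $\sigma_i$; (ii) either all honest parties output $\bot$, or none does; (iii) in the non-abort case the strings $\set{\sigma_i}$ form a consistent instance of the Pfitzmann--Waidner pseudo-signature setup \cite{PW92}, which suffices to realize information-theoretically secure broadcast against any strict minority, in particular against $\le t_2$ corruptions; (iv) if at most $t_1$ parties are corrupted then no honest party outputs $\bot$ (robustness); and (v) the honest parties' joint view is statistically simulatable, since $\pi_{\mathrm{pre}}$ does not depend on the inputs to $f$. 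Such a $\pi_{\mathrm{pre}}$ is obtained by executing the \cite{PW92} setup on top of a \emph{detectable broadcast} primitive built from scratch: the detectable broadcast is $t_1$-robust (it genuinely realizes broadcast whenever $\le t_1 < \ThirdP$ parties cheat) and, for an arbitrary corruption pattern, satisfies ``broadcast, or commonly-detected abort.'' Whenever the detectable broadcast reports a fault, $\pi_{\mathrm{pre}}$ outputs $\bot$; the common-abort and robustness properties of $\pi_{\mathrm{pre}}$ are then inherited directly from those of the detectable broadcast.

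\textbf{Phase 2 and the combined protocol, with security analysis.} Given a consistent setup $\sigma$, the parties use it to emulate a broadcast channel (via \cite{PW92}) and run a standard information-theoretically secure honest-majority MPC protocol in the broadcast-hybrid model \cite{RB89}; since $t_2 < \HalfP$, this computes $f$ with \emph{full} security, including guaranteed output delivery, against $\le t_2$ corruptions. The final protocol $\pi$ is: run $\pi_{\mathrm{pre}}$; if its output is $\bot$, every honest party outputs $\bot$ and halts; otherwise run the \cite{RB89} protocol using $\sigma$ to emulate broadcast. For the analysis, if $\abs{\CorSet} \le t_1$, robustness of $\pi_{\mathrm{pre}}$ guarantees a consistent $\sigma$, hence a correct emulated broadcast, and (as $t_1 \le t_2 < \HalfP$) the \cite{RB89} phase yields full security; the simulator runs the input-independent simulator of $\pi_{\mathrm{pre}}$ and then the \cite{RB89} simulator in the broadcast-hybrid model. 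If $t_1 < \abs{\CorSet} \le t_2$, there are two cases: either $\pi_{\mathrm{pre}}$ aborts, in which case common abort forces all honest parties to output $\bot$ having never touched their inputs, and the simulator simply instructs the trusted party to deliver $\bot$ to everyone --- exactly the latitude granted in the ``fairness'' ideal model --- or $\pi_{\mathrm{pre}}$ succeeds with a consistent $\sigma$, and then as above the \cite{RB89} phase gives full security against $\le t_2 < \HalfP$ corruptions. Composing the two phases sequentially, with $\pi_{\mathrm{pre}}$ standing in for its ideal setup functionality, and invoking statistical indistinguishability of each phase, yields $(t_1,t_2)$-security.

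\textbf{Main obstacle.} The crux is establishing the building block $\pi_{\mathrm{pre}}$ with exactly these thresholds: arguing that detectable broadcast from scratch can be made simultaneously $t_1$-robust (using $3t_1 < n$) and ``detectable'' against \emph{every} corruption pattern, and that the \cite{PW92} pseudo-signature setup it bootstraps both (a) remains simulatable when the detectable broadcast carrying it is partially adversarial and (b) genuinely supports information-theoretic broadcast against all $\le t_2$ corruptions rather than some smaller threshold. Once $\pi_{\mathrm{pre}}$ is in hand, the remaining work --- the sequential composition gluing Phase~1 and Phase~2 and the bookkeeping that a Phase-1 abort is correctly accounted for in the ``fair'' ideal model --- is routine.
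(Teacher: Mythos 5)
First, note that the paper does not prove \cref{thm:6-FHHW03} at all: it is imported verbatim as \cite[Thm.\ 6]{FHHW03} and used as a black box in the proof of \cref{prop:upperBound:honsetMajority}. So there is no in-paper proof to match; what can be assessed is whether your reconstruction of the cited result is sound. Your high-level architecture (an input-independent detectable precomputation that sets up \cite{PW92} pseudo-signatures, followed by honest-majority MPC \cite{RB89} over the emulated broadcast, with all aborts absorbed into the input-independent phase so that they land squarely in the $t_1$-threshold ideal model) is indeed the architecture of \cite{FHHW03}, and your Phase~2 and composition bookkeeping are fine.

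The genuine gap is in Phase~1. You extract from the hypothesis only the two weak consequences $t_1<\ThirdP$ and $t_2<\HalfP$, and then posit a detectable broadcast that is simultaneously $t_1$-robust and ``detectable against \emph{every} corruption pattern.'' Such a primitive does not exist: the tightness results of \cite{FHHW03} themselves show that two-threshold broadcast with robustness threshold $t_1\geq 1$ and detection threshold $t_2$ is achievable \emph{only if} $t_1+2t_2<n$, so a broadcast that is $1$-robust yet detectable for arbitrary corruption patterns (\ie $t_2=n-1$) is impossible for every $n$. Consequently the full inequality $t_1+2t_2<n$ must be used inside the construction of the two-threshold broadcast/precomputation itself --- it is the trade-off between how hard it is for the adversary to force an abort and how many corruptions the detection guarantee survives --- and cannot be replaced by its corollaries $3t_1<n$ and $2t_2<n$. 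Relatedly, the claim that the resulting pseudo-signature setup supports broadcast ``against any strict minority'' conflates the threshold of the \cite{PW92} transferable-signature setup (which, once consistently established, tolerates $t<n$) with the threshold up to which the precomputation's consistency-or-common-abort guarantee holds, which is exactly $t_2$ and again rests on $t_1+2t_2<n$. With Phase~1 replaced by the actual $(t_1,t_2)$-two-threshold broadcast of \cite{FHHW03} (rather than ``detectable broadcast from scratch plus $t_1<\ThirdP$''), the rest of your argument goes through.
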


We now proceed to the proof of \cref{prop:upperBound:honsetMajority}.
\begin{proof}[Proof of \cref{prop:upperBound:honsetMajority}]
Let $f$ be an $(n-2t)$-dominated functionality with default output value $\ys$. If $n-2t = 1$, then $f$ is $1$-dominated, and since $t<\HalfP$, $f$ can be $t$-securely computed with information-theoretic security in the secure-channels broadcast model (\eg using \citet{RB89}). Hence, the proposition follows from \cite[Thm.\ 7]{CohenLindell14}.\footnote{When an honest majority is assumed, the result in \cite{CohenLindell14} can be adjusted to use the information-theoretically secure protocol in \cite[Thm.\ 3]{FGHHS02}. In the secure-channels \ptp model, this protocol uses information-theoretically pseudo-signatures \cite{PW92} for computing a setup, to be used for authenticated broadcast.}

For $n-2t \ge 2$, set $t_1=n-2t-1$ and $t_2=t$, and let $\pi'$ be the $n$-party protocol, guaranteed to exist by \cref{thm:6-FHHW03}, that $(t_1,t_2)$-securely computes $f$. We define $\pi$ to be the following $n$-party protocol for computing $f$ in the \ptp model with secure channels.
\begin{protocol}[$\pi$]~
\begin{enumerate}
  \item \label{step:1-pi}
  The parties run the protocol $\pi'$.
  Let $y_i$ be the output of $\Party_i$ at the end of the execution.
	
  \item \label{step:2-pi}
  If $y_i\neq\bot$, party $\Party_i$ outputs $y_i$, otherwise it outputs $\ys$.
\end{enumerate}
\end{protocol}

Let $\Adv$ be an adversary attacking the execution of $\pi$ and let $\CorSet\subseteq[n]$ be a subset of size at most $t$. It follows from \cref{thm:6-FHHW03} that there exists a (possibly aborting) adversary $\Sim'$ for \Adv in the $t_1$-threshold ideal model such that
$$
\set{\bigbrack \REAL_{\pi', \CorSet, \Adv(\aux)}(\vect{x}, \secParam)}_{(\vect{x}, \aux)\in(\zs)^{n+1}, \secParam\in\N}
\statclose
\set{\bigbrack \IDEAL^{t_1}_{f, \CorSet, \Sim'(\aux)}(\vect{x}, \secParam)}_{(\vect{x}, \aux)\in(\zs)^{n+1}, \secParam\in\N}.
$$

Using $\Sim'$, we construct the following non-aborting adversary $\Sim$ for the full-security ideal model. On inputs $\set{x_i}_{i\in\CorSet}$ and auxiliary input $z$, $\Sim$ starts by emulating $\Sim'$ on these inputs, playing the role of the trusted party (in the $t_1$-threshold ideal model). If $\Sim'$ sends an \abort command, it is guaranteed that $\size{\CorSet}\geq n-2t$ and since $f$ is $(n-2t)$-dominated, there exist input values $\set{x'_i}_{i\in\CorSet}$ that determine the output of $f$ to be $\ys$. So in this case, $\Sim$ sends these $\set{x'_i}_{i\in\CorSet}$ to the trusted party (in the full-security ideal model) and returns $\bot$ to $\Sim'$. Otherwise, $\Sim'$ does not abort and $\Sim$ forwards the message from $\Sim'$ to the trusted party and the answer from the trusted party back to $\Sim'$. In both cases $\Sim$ outputs whatever $\Sim'$ outputs and halts.

A main observation is that the views of the adversary $\Adv$ in an execution of $\pi$ and in an execution of $\pi'$ (with the same inputs and random coins) are identical. This holds since the only difference between $\pi$ and $\pi'$ is in the second step of $\pi$ that does not involve any interaction. It follows that in case the output of the parties in Step~\ref{step:1-pi} of $\pi$ is not $\bot$, the joint distribution of the honest parties' output and the output of $\Adv$ in $\pi$ is statistically close to the output of the honest parties and of $\Sim$ in the full-security ideal model (since the later is exactly the output of the honest parties and of $\Sim'$ in the $t_1$-threshold ideal model). If the output in Step~\ref{step:1-pi} of $\pi$ is $\bot$, then all honest parties in $\pi$ output $\ys$. In this case $\Sim'$ sends \abort (except for a negligible probability) and since $\Sim$ sends to the trusted party the input values $\set{x'_i}_{i\in\CorSet}$ that determine the output of $f$ to be $\ys$, the honest parties' output is $\ys$ also in the ideal computation. We conclude that
$$
\set{\bigbrack \REAL_{\pi, \CorSet, \Adv(\aux)}(\vect{x}, \secParam)}_{(\vect{x}, \aux)\in(\zs)^{n+1}, \secParam\in\N}
\statclose
\set{\bigbrack \IDEAL_{f, \CorSet, \Sim(\aux)}(\vect{x}, \secParam)}_{(\vect{x}, \aux)\in(\zs)^{n+1}, \secParam\in\N}.
$$
\end{proof}

\begin{theorem}[restating the first part of \cref{thm:mainCharacterizationInf}]\label{thm:Characterization:HonestMajority}
Let $n\geq 3$ and $\ThirdP\leq t<\HalfP$. A symmetric $n$-party functionality can be $t$-securely computed in the secure-channels \ptp model, if and only if it is $(n-2t)$-dominated.
\end{theorem}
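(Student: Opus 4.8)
The plan is to deduce this characterization by simply gluing together the two complementary results already established in this section, with no work beyond bookkeeping. The ``only if'' direction is immediate from the first item of \cref{lem:MPCImp}: if a symmetric $n$-party $f$ can be $t$-securely computed in the secure-channels \ptp model with $\ThirdP \le t < \HalfP$, then since $f$ is symmetric all honest parties agree on the output up to negligible probability, so the protocol is $(1-\negl,t)$-consistent, and the attack of \cref{lem:AdvStrictPoly} (followed by \cref{claim:dominated}, which upgrades weak $(n-2t)$-domination to $(n-2t)$-domination using $n-2t\le\ThirdP$) forces $f$ to be $(n-2t)$-dominated. So this half is just a pointer to \cref{lem:MPCImp}.

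For the ``if'' direction I would invoke \cref{prop:upperBound:honsetMajority} verbatim: any $(n-2t)$-dominated symmetric $f$ can be $t$-securely computed in the secure-channels \ptp model, in fact with information-theoretic security, by compiling the two-threshold protocol of \cref{thm:6-FHHW03} with thresholds $t_1 = n-2t-1$ and $t_2 = t$ (these satisfy $t_1\le t_2$ and $t_1+2t_2<n$), and falling back, when $n-2t=1$, to the $1$-dominated case handled through \cite[Thm.\ 7]{CohenLindell14}. Combining the two directions yields the claimed equivalence.

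I do not expect any genuine obstacle at this stage: the substance lies entirely in the two ingredients — the ring/hexagon attack of \cref{sec:Attack} underlying the lower bound, and the detectable-precomputation compilation underlying \cref{prop:upperBound:honsetMajority} — both of which are already in hand. The remaining task is purely to observe that the lower bound of \cref{lem:MPCImp} is matched exactly by the upper bound of \cref{prop:upperBound:honsetMajority}, so that $(n-2t)$-domination is both necessary and sufficient, giving a tight characterization.
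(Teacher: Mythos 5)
Your proposal is correct and follows exactly the paper's own argument: the paper proves this theorem in one line by citing \cref{lem:MPCImp} for necessity and \cref{prop:upperBound:honsetMajority} for sufficiency. The additional exposition you give of the two ingredients accurately restates their proofs but adds nothing beyond what is already established.
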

\begin{proof}
Immediately follows by \cref{lem:MPCImp} and \cref{prop:upperBound:honsetMajority}.
\end{proof}


\subsubsection{Defining Two-Threshold Security}\label{sec:thresholdSecurity}
We present a weaker variant of the ideal model that allows for a premature (and fair) abort, in case sufficiently many parties are corrupted. Next, we define two-threshold security of protocols.
\paragraph{Threshold ideal-model execution.}
A $t$-threshold ideal computation of an $n$-party functionality $f$ on input $\vx=(x_1,\ldots,x_n)$ for parties $(\Party_1,\ldots,\Party_n)$, in the presence of an ideal-model adversary $\Adv$ controlling the parties indexed by $\CorSet\subseteq[n]$, proceeds via the following steps.
\begin{itemize}
  \item[Sending inputs to trusted party:]
  An honest party $\Party_i$ sends its input $x_i$ to the trusted party.
  The adversary may send to the trusted party arbitrary inputs for the corrupted parties. If $\size{\CorSet}>t$, then the adversary may send a special \abort command to the trusted party. Let $x_i'$ be the value actually sent as the input of party $\Party_i$.

  \item[Trusted party answers the parties:]
  If the adversary sends the special \abort command (specifically, $\size{\CorSet}>t$), then the trusted party sends $\bot$ to all the parties. Otherwise, if $x_i'$ is outside of the domain for $\Party_i$, for some index $i$, or if no input is sent for $\Party_i$, then the trusted party sets $x_i'$ to be some predetermined default value. Next, the trusted party computes $f(x_1', \ldots, x_n')=(y_1, \ldots, y_n)$ and sends $y_i$ to party $\Party_i$ for every $i$.

  \item[Outputs:]
  Honest parties always output the message received from the trusted party and the corrupted parties output nothing.
  The adversary $\Adv$ outputs an arbitrary function of the initial inputs $\set{x_i}_{i\in\CorSet}$, the messages received by the corrupted parties from the trusted party $\set{y_i}_{i\in\CorSet}$ and its auxiliary input.
\end{itemize}

\begin{definition}[Threshold ideal-model computation]\label{def:IdealMode_threshold}
Let $f\colon(\zs)^n \mapsto (\zs)^n$ be an $n$-party functionality and let $\CorSet\subseteq [n]$. The {\sf joint execution of $f$ under $(\Adv, I)$ in the $t$-threshold ideal model}, on input vector $\vect{x}=(x_1, \ldots, x_n)$, auxiliary input $\aux$ to \Adv and security parameter $\secParam$, denoted $\IDEAL^t_{f,\CorSet,\Adv(\aux)}(\vect{x},\secParam)$, is defined as the output vector of $\Party_1, \ldots, \Party_n$ and $\Adv(\aux)$ resulting from the above described ideal process.
\end{definition}

\begin{definition}\label{def:SecureProtocol_threshold}
Let $f\colon(\zs)^n \mapsto (\zs)^n$ be an $n$-party functionality, and let $\pi$ be a probabilistic polynomial-time protocol computing $f$. The {\sf protocol $\pi$ $(t_1,t_2)$-securely computes $f$ (with information-theoretic security)}, if for every real-model adversary \Adv, there exists an adversary $\Sim$ for the $t_1$-threshold ideal model, whose running time is polynomial in the running time of $\Adv$, such that for every $\CorSet\subseteq [n]$ of size at most $t_2$
$$
\set{\bigbrack \REAL_{\pi, \CorSet, \Adv(\aux)}(\vect{x}, \secParam)}_{(\vect{x}, \aux)\in(\zs)^{n+1}, \secParam\in\N}
\statclose
\set{\bigbrack \IDEAL^{t_1}_{f, \CorSet, \Sim(\aux)}(\vect{x}, \secParam)}_{(\vect{x}, \aux)\in(\zs)^{n+1}, \secParam\in\N}.
$$
\end{definition}

\bibliographystyle{abbrvnat}
\bibliography{crypto}

\end{document}